\newcommand{\be}[0]{\begin{equation}}
\newcommand{\ee}[0]{\end{equation}}
\newcommand{\C}[0]{\mathbbm{C}}
\newcommand{\N}[0]{\mathbbm{N}}
\newcommand{\R}[0]{\mathbbm{R}}
\newcommand{\Z}[0]{\mathbbm{Z}}
\newcommand{\hs}[0]{{\rm hs}}
\theoremstyle{plain}
\newtheorem{theorem}{Theorem}[section]
\newtheorem{lemma}[theorem]{Lemma}
\theoremstyle{definition}
\numberwithin{equation}{section}
\begin{document}

\vspace*{-1cm}
\thispagestyle{empty}
\vspace*{1.5cm}

\begin{center}
{\Large 
{\bf Finite higher spin transformations from exponentiation}}
\vspace{2.0cm}

{\large Samuel Monnier}
\vspace*{0.5cm}

Institut für Mathematik, 
Universität Zürich,\\
Winterthurerstrasse 190, 8057 Zürich, Switzerland

\vspace*{1cm}

{\bf Abstract}
\end{center}

We study the exponentiation of elements of the gauge Lie algebras $\hs(\lambda)$ of three-dimensional higher spin theories. Exponentiable elements generate one-parameter groups of finite higher spin symmetries. We show that elements of $\hs(\lambda)$ in a dense set are exponentiable, when pictured in certain representations of $\hs(\lambda)$, induced from representations of $SL(2,\mathbbm{R})$ in the complementary series. We also provide a geometric picture of higher spin gauge transformations clarifying the physical origin of these representations. 
This allows us to construct an infinite-dimensional topological group $HS(\lambda)$ of finite higher spin symmetries. Interestingly, this construction is possible only for $0 \leq \lambda \leq 1$, which are the values for which the higher spin theory is believed to be unitary and for which the Gaberdiel-Gopakumar duality holds. We exponentiate explicitly various commutative subalgebras of $\hs(\lambda)$. Among those, we identify families of elements of $\hs(\lambda)$ exponentiating to the unit of $HS(\lambda)$, generalizing the logarithms of the holonomies of BTZ black hole connections. Our techniques are generalizable to the Lie algebras relevant to higher spin theories in dimensions above three.

\newpage

\tableofcontents

\section{Introduction and summary}

It is believed that the perturbative formulation of string theory hides a large part of the symmetries of the underlying fundamental theory. The string tension should be an order parameter for the breaking of the hidden symmetry: as the tension goes to zero, the tower of massive string modes becomes massless, resulting in the restoration of a large gauge symmetry. The modes becoming massless are associated to field theories describing particles of spin higher than 2 (henceforth \textit{higher spins}), whose formulation even at the classical level is challenging.

While the precise effective field theory describing the zero tension limit of string theory, if it exists at all, has remained elusive, much progress has been made by Vasiliev and others in formulating field theories involving higher spins in various dimensions (see for instance \cite{Vasiliev:1992gr, Vasiliev:1996hn, Vasiliev:1995dn, Bekaert:2005vh} for papers relevant to the three-dimensional case of interest to us here). One can therefore entertain the hope that understanding their symmetries could provide a glimpse of the hidden symmetry group of string theory. Moreover many of these theories, formulated on anti de Sitter space, seem to admit quantum field theories as holographic duals, suggesting that they are the classical limits of consistent theories of quantum gravity.

Like gravity, higher spin theories admit first order formulations involving connections valued in certain Lie algebras, which encode the infinitesimal higher spin symmetries. These Lie algebras are well-understood: they are enveloping algebras of non-compact real forms of semi-simple Lie algebras, quotiented by a certain ideal \cite{Iazeolla:2008ix, Bekaert:2008sa, Boulanger:2011se, Joung:2014qya}. The corresponding groups of finite higher spin symmetries have to our knowledge not yet been constructed. The aim of this paper is to construct and begin to study the simplest of them.

We will therefore focus on the higher spin Lie algebras $\hs(\lambda)$ underlying certain higher spin theories in $AdS_3$, well known to be holographically dual to large $N$ limits of exactly solvable two-dimensional conformal field theories \cite{Gaberdiel:2010pz, Gaberdiel:2012uj}. 
$\hs(\lambda)$ is essentially the enveloping algebra $U({\rm sl}(2,\mathbbm{R}))$, seen as a Lie algebra and quotiented by the ideal $\Omega - \frac{1}{2}(\lambda^2-1) \mathbbm{1}$, where $\Omega$ denotes the quadratic Casimir of ${\rm sl}(2,\mathbbm{R})$. $\lambda$ can a priori be any complex number, and for $\lambda$ an integer larger than 1, $\hs(\lambda)$ admits a quotient isomorphic to ${\rm sl}(\lambda,\mathbbm{R})$, associated with higher spin theories with a finite number of higher spin fields \cite{oai:arXiv.org:1008.4744}. However, the holographic picture suggests that the associated higher spin theories are unitary only for $\lambda$ real in the window $0 \leq \lambda \leq 1$, and we will obtain further independent evidence for this.

Naively, given a Lie algebra $\mathfrak{g}$ which, like $hs(\lambda)$, is also an associative algebra, one may want to define the corresponding Lie group as the group generated by the series
\be
\label{EqFormPowSerExp}
\exp x = \sum_{n = 0}^\infty \frac{1}{n!} x^n \;, \quad x \in \mathfrak{g} \;.
\ee
This works perfectly well for finite-dimensional algebras, or even for infinite-dimensional Banach algebras, but as we explain in Section \ref{SecExpAdjRep}, this fails for $\hs(\lambda)$. The problem is that in order to make sense of the formal series \eqref{EqFormPowSerExp} as an actual infinite sum, we need a norm, or at least a topology on $\mathfrak{g}$ ensuring the convergence of \eqref{EqFormPowSerExp}. We present a simple argument that no such norm exists for $\hs(\lambda)$.

There are however other ways of exponentiating a Lie algebra. For instance, given a faithful representation, one may hope to be able to exponentiate the operators representing the Lie algebra elements.\footnote{This approach has been mentioned before in \cite{Campoleoni:2013lma, Gaberdiel:2013jca}, but the representations of ${\rm sl}(2,\mathbbm{R})$ considered there do not integrate to representations of $SL(2,\mathbbm{R})$.} In particular, if the representation is unitary, one may hope that the Lie algebra elements are represented by skew-adjoint operators, which do exponentiate to one-parameter groups of unitary transformations by the spectral theorem. Given that $\hs(\lambda)$ admits an $sl(2,\mathbbm{R})$ subalgebra (related to local Poincaré transformations in the higher spin theory), a zero order requirement for such a representation is that it integrates to a representation of $SL(2,\mathbbm{R})$. It turns out that there exists a unique unitary representation of $SL(2,\mathbbm{R})$ inducing a faithful representation of $\hs(\lambda)$ for each $\lambda$ in the window $0 < \lambda < 1$.\footnote{There are as well unitary representations of $\hs(\lambda)$ for $\lambda = 0,1$, induced from representations of $SL(2,\R)$ in the principal and discrete series, respectively. We will not consider them in this work, but we expect that the same ideas would apply to these limit cases.} These are infinite-dimensional representations $\mathcal{C}_\lambda$ of $SL(2,\mathbbm{R})$ in the complementary series.

In order to justify the use of the complementary series of representation, we offer a geometric interpretation of the higher spin gauge symmetry of the three-dimensional higher spin theory, based on ideas in \cite{Vasiliev:1996hn, Bekaert:2008sa}. In a nutshell, the higher spin fields are gauging the local symmetries of the Klein-Gordon equation governing the scalar field of the theory. We show that representations in the complementary series are realized in this context, with the correct parameter $\lambda$, by the action of the local Poincaré isometries on the solutions of the Klein-Gordon equation. We believe these ideas should eventually provide a conceptual definition of the higher spin symmetry group, but we leave this question for future work.

As the representations $\mathcal{C}_\lambda$ are unitary, the elements of (the relevant real form of) $\hs(\lambda)$ are represented by skew-symmetric operators, and representation theory \cite{Nelson1959} provides sufficient conditions for these operators to be skew-adjoint. Somewhat unfortunately, not all of these operators are skew-adjoint and exponentiate. The set of exponentiable elements of $\hs(\lambda)$ is however dense, in the sense that one can always add an arbitrary small perturbation to any element of $\hs(\lambda)$ in order to make it exponentiable. We can then define the higher spin group $HS(\lambda)$ as the group generated by the one-parameter groups of unitary transformations associated to exponentiable elements. $HS(\lambda)$ inherits a natural group topology, but, at least in this topology, it is not a Lie group (i.e. it is not an infinite-dimensional smooth manifold).

Let us emphasize that $HS(\lambda)$ is \textit{not} the higher spin analogue of the diffeomorphism group. Rather, $HS(\lambda) \times HS(\lambda)$ (or one of its double covers) is the structure group of the principal bundle generalizing the frame bundle in the first order formalism for classical 3d gravity, i.e. it generalizes the $(SL(2,\mathbbm{R}) \times SL(2,\mathbbm{R}))/\mathbbm{Z}_2$ group of local Poincaré symmetries of gravity in $AdS_3$. It is well-known that the diffeomorphism group can be recovered from field-dependent local Poincaré transformations \cite{Witten:1988hc}. Similarly, we expect the higher spin symmetry group of spacetime, generalizing the diffeomorphism group, to be realized as field-dependent $HS(\lambda) \times HS(\lambda)$-transformations.

The representations $\mathcal{C}_\lambda$ admit explicit models in terms of differential operators on the real line or on the unit circle, which allows us to exponentiate explicitly several maximal commutative subalgebras of $\hs(\lambda)$ by diagonalizing the corresponding differential operators. This gives us a first glimpse into the global features of $HS(\lambda)$: we describe two (conjugacy classes of) commutative subgroups isomorphic to $\mathbbm{R}^\mathbbm{N}$, the countable direct product of copies of the additive group $\mathbbm{R}$, and one commutative subgroup isomorphic to $U(1)^\mathbbm{N}$. 

Associated to the latter are elements of $\hs(\lambda)$ exponentiating to the unit of $HS(\lambda)$. Such elements are of great interest to the construction of spherically symmetric solutions of the corresponding higher spin theories (see for instance \cite{Gutperle:2011kf, Kraus:2012uf}). We recover the logarithm of the holonomy of the BTZ black hole connection along the time-like circle as one of these elements, and find natural generalizations. We hope that these results will be useful to construct new solutions of the higher spin theory, in particular black holes with finite higher spin charges. (The current construction in \cite{Kraus:2012uf} is perturbative in the higher spin charge, and the convergence of the associated series is not clear.)

The techniques used in this paper are in principle applicable to any Lie algebra that can be pictured as the quotient of an enveloping algebra of a semi-simple Lie algebra by the annihilator of a unitary representation. Several works suggest that many higher spin Lie algebras associated with theories in higher dimensions are of this form \cite{Bekaert:2008sa, 2011arXiv1107.5840M,Joung:2014qya}. Technical difficulties might appear at the computational level, as the ordinary linear differential equations appearing throughout this paper will in general turn into partial differential equations.

Other interesting points are worth mentioning. First, as remarked above, the complementary series representations provide two models of $\hs(\lambda)$ in terms of differential operators in one variable. These models are different from the one appearing in the higher spin literature in the guise of the star product representation. They could be used to construct new star product representations, which might prove computationally useful in certain situations. 

Although we do not attempt this here, our construction makes possible the identification of the $SL(2,\mathbbm{R})$ subgroups of $HS(\lambda)$. We expect each of these subgroups to be associated with an $AdS_3$ vacuum of the higher spin theory, coming with a spectrum of higher spin fields determined by the embedding. We emphasize that in this infinite-dimensional setting, it is not sufficient to study this problem at the level of Lie algebras, as there might very well be ${\rm sl}(2,\mathbbm{R})$ subalgebras in $\hs(\lambda)$ which do not exponentiate.\\

The paper is organized as follows. Section \ref{SecHigSpinLieAlg} is a review of the construction of the higher spin Lie algebras of interest to us. In Section \ref{SecExpAdjRep}, we review old results of Dixmier, who characterized the elements of $\hs(\lambda)$ whose exponential have a well-defined adjoint action, mapping finite sums of generators to finite sums of generators (Section \ref{SecClassExpElAdjRep}). We also review his characterization of the inner automorphism group of $\hs(\lambda)$ (Section \ref{SecGrpInnAut}). We provide explicit examples of Lie algebra elements which are not exponentiable in the sense above (Section \ref{SecNonExpElAdj}) and explain the problems encountered when trying to make sense of the formal series \eqref{EqFormPowSerExp} (Section \ref{SecCompl}). In Section \ref{SecExpCompSeries}, we turn to the problem of exponentiating $\hs(\lambda)$ in the complementary series of representations of $SL(2,\mathbbm{R})$. We start by reviewing basic facts about the represenation theory of $SL(2,\mathbbm{R})$ (Section \ref{SecBasFactRepTh}) and show how representations in the complementary series appear naturally in the higher spin theory (Section \ref{SecPhysMot}). After some technical results about representations in the complementary series (Section \ref{SecFaithCompSerRep} and \ref{SecCircModCompSerRep}), we provide the characterization of the exponentiable elements of $\hs(\lambda)$ (Section \ref{SecExpElComp}) and exhibit a non-exponentiable element (Section \ref{SecV31IsNotEssSA}). We define the higher spin group, its topology, and discuss why it is not a Lie group (Section \ref{SecPropGrHSSym}). We also describe the Euclidean higher spin group and mention some puzzles related to the adjoint action on $\hs(\lambda)$. In Section \ref{SecExFinHSTrans}, we consider three different maximal commutative subalgebras of $\hs(\lambda)$ and exponentiate them explicitly by diagonalizing the differential operators associated to their generators. We also characterize certain families of elements of $\hs(\lambda)$ exponentiating to the identity of $HS(\lambda)$. Appendix \ref{SecRevMathRes} reviews standard material about operators in Hilbert spaces, as well as less standard material about representations of Lie algebras by unbounded operators.

\section{The higher spin Lie algebra}

\label{SecHigSpinLieAlg}

We review here the construction of the complex higher spin Lie algebra, as well as the real form of interest to us.

\subsection{The complex higher spin Lie algebra}

\label{SecCompHSLieAlg}

Most of the material in this section can for instance be found in Section 2.1.3 of \cite{Gaberdiel:2012uj}, where the reader will also find pointers to the literature. We start by defining the complex version of $\hs(\lambda)$, $\hs_{\C}(\lambda)$. The Lie algebra $\mathfrak{sl}(2,\C)$ consists of complex $2\times 2$ matrices with vanishing trace. A convenient basis is
\be
X_+ = \left( \begin{array}{cc} 0 & 1 \\ 0 & 0 \end{array} \right) \;, \quad X_- = \left( \begin{array}{cc} 0 & 0 \\ 1 & 0 \end{array} \right) \;, \quad X_0 = \left( \begin{array}{cc} 1 & 0 \\ 0 & -1 \end{array} \right) \;.
\ee
These generators satisfy the commutation relations
\be
[X_0,X_\pm] = \pm 2 X_\pm \;, \quad [X_+, X_-] = X_0 \;.
\ee
The Casimir operator is
\be
\Omega = X_+ X_- + X_- X_+ + \frac{1}{2}X_0^2 \;.
\ee
(Remark that our definition of the Casimir differs by a factor 2 from the definition of Section 2.1.3 of \cite{Gaberdiel:2012uj}: $C^{\mathfrak{sl}} = \Omega/2$. This is the reason for an extra factor 2 in the definition of $B(\lambda)$ below.) 

Let $U(\mathfrak{sl}(2,\C))$ be the enveloping algebra of $\mathfrak{sl}(2,\C)$ and define $B(\lambda)$ to be the quotient of $U(\mathfrak{sl}(2,\C))$ by the ideal generated by $\Omega - \frac{1}{2}(\lambda^2-1) \mathbbm{1}$, for $\lambda \in \C$. We endow $B(\lambda)$ with the Lie bracket given by the commutator of the associative product in $U(\mathfrak{sl}(2,\C))$. We have the decomposition of Lie algebras
\be
\label{EqFactDirSumId}
B(\lambda) = \C \oplus \hs_\C(\lambda) \;,
\ee
which we take as the definition of the higher spin Lie algebra $\hs_\C(\lambda)$. Define
\be
\label{EqDefVSBasisHSL}
V^s_n = \frac{(n+s-1)!}{(2s-2)!} \big[ \underbrace{X_+, ... [X_+,[X_+}_{s-1-n \; {\rm times}},X_-^{s-1}]] \big]
\ee
for $s \in \N$, $s > 2$, $n \in \Z$, $|n| \leq s$. (This definition coincides with the one in Section 2.1.3 of \cite{Gaberdiel:2012uj}, although our choice of generators of $\mathfrak{sl}(2,\C)$ is slightly different.) $\{V^s_n\}$ is a basis of $\hs_\C(\lambda)$ as a vector space. More explicitly, we have for $s = 2,3$
$$
V^2_1 = X_- \;, \quad V^2_0 = \frac{1}{2}X_0 \;, \quad V^2_{-1} = -X_+ \;,
$$
\be
V^3_2 = X_-^2 \;, \quad V^3_1 = \frac{1}{2}(X_0X_-  + X_-) \;, \quad V^3_{0} = \frac{1}{4}X_0^2 - \frac{1}{6} \Omega = \frac{1}{4}X_0^2 - \frac{1}{12}(\lambda^2-1) \;,
\ee
$$
V^3_{-1} = -\frac{1}{2}(X_+ X_0 + X_+) \;, \quad V^3_{-2} = X_+^2\;.
$$
We write $\hs^{(s)}_\C(\lambda)$ for the subspace generated by $\{V^s_n\}$. We have the commutation relations
\be
\label{EqCommRelV2m}
[V^2_m,V^s_n] = (-n + m(s-1)) V^s_{m+n} \;,
\ee
showing that $\hs^{(s)}_\C(\lambda)$ forms a $2s-1$-dimensional irreducible module of the $\mathfrak{sl}(2,\C)$ subalgebra $\hs^{(2)}_\C(\lambda)$. There exists a closed formula for the commutation relations of the basis elements $\{V^s_n\}$ \cite{Pope:1989sr}, which we will not need here. Finally, $\hs^{(2)}_\C(\lambda) \oplus \hs^{(3)}_\C(\lambda)$ generates $\hs_\C(\lambda)$ as a Lie algebra.

\subsection{Real forms}

\label{SecRealForms}

We now consider real forms of $\hs_\C(\lambda)$ which restrict to $\mathfrak{sl}(2,\R)$ on $\hs^{(2)}_\C(\lambda)$. Recall that a real form of a complex Lie algebra $\mathfrak{g}$ is given by the fixed subalgebra of an antilinear involutive automorphism. Let $\theta_0$ be the antilinear automorphism of $\mathfrak{sl}(2,\C)$ leaving $\mathfrak{sl}(2,\R)$ fixed, i.e. complex conjugation. $\theta_0$ acts trivially on the real generators $X_+$, $X_0$, and $X_-$, and obviously extends to $\hs_{\C}(\lambda)$. Let $\theta_1$ be the involution of $\hs_{\C}(\lambda)$ acting by the identity on  $\hs^{(2)}_\C(\lambda)$ and by multiplication by $-1$ on $\hs^{(3)}_\C(\lambda)$. There are a priori two real forms of $\hs_\C(\lambda)$ of potential interest for higher spin theories, given by the fixed subalgebras of $\theta_0$ and $\theta_0 \theta_1$. In the following, we will only consider the latter and write it $\hs(\lambda)$. As a real Lie algebra, it is generated by $\{V^2_n, iV^3_m\}$ with $-1 \leq n \leq 1$ and $-2 \leq m \leq 2$, and was called ${\rm su}(\infty,\infty)$ in \cite{Pope:1989sr}. 
$\hs(\lambda)$ is the natural real form to consider for the following reason:\footnote{The physical relevance of $\hs(\lambda)$ is also supported by holography. Writing $W^k_0$ for the zero modes of the generators of the W-algebra of the CFT dual to the higher spin theory, equations (4.15) to (4.19) of \cite{Gaberdiel:2013jpa} show that the generators $W^{2n}_0$ and $iW^{2n+1}_0$ have a real spectrum in the CFT. The representation considered in \cite{Gaberdiel:2013jpa} is therefore a unitary representation of the real W-algebra generated by $W^{2n}(z)$ and $iW^{2n+1}(z)$. Taking the limit of infinite central charge and restricting to the wedge algebra \cite{Bowcock:1991zk,Gaberdiel:2011wb}, one recovers $\hs(\lambda)$ as a subalgebra of this real W-algebra. We thank Matthias Gaberdiel for pointing this out to us.} 
\begin{lemma}
\label{LemRepASAOp}
Let $\rho$ be an irreducible representation of $\mathfrak{sl}(2,\R)$ by skew-symmetric operators, with $\rho(\Omega) = \frac{1}{2}(\lambda^2-1)\mathbbm{1}$. Then $\rho$ extends to a representation of $\hs(\lambda)$ by skew-symmetric operators.
\end{lemma}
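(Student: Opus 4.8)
The plan is to first extend $\rho$ to all of $\hs(\lambda)$ and then verify skew-symmetry generator by generator. Since $\rho$ represents the real Lie algebra $\mathfrak{sl}(2,\R)$ on a dense invariant domain $\mathcal{D}$ of a complex Hilbert space, it extends $\C$-linearly to $\mathfrak{sl}(2,\C)$ and then, as an associative-algebra homomorphism, to $U(\mathfrak{sl}(2,\C))$ acting on $\mathcal{D}$. The hypothesis $\rho(\Omega)=\tfrac12(\lambda^2-1)\mathbbm{1}$ means that $\rho$ annihilates the ideal generated by $\Omega-\tfrac12(\lambda^2-1)\mathbbm{1}$, so it descends to $B(\lambda)$ and hence restricts to a representation of $\hs_\C(\lambda)$; this is the asserted extension (irreducibility of $\rho$ only serves to make $\rho(\Omega)$ automatically scalar, which here is assumed). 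It therefore remains to show that the operators representing the real form $\hs(\lambda)$ are skew-symmetric on $\mathcal{D}$.

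I would reduce this to the generators. On the common invariant domain $\mathcal{D}$, the skew-symmetric operators are closed under \emph{real} linear combinations and under the commutator: if $\langle A v,w\rangle=-\langle v,Aw\rangle$ and likewise for $B$, then $\langle [A,B]v,w\rangle=\langle v,(BA-AB)w\rangle=-\langle v,[A,B]w\rangle$ for $v,w\in\mathcal{D}$. Since $\hs(\lambda)$ is generated as a real Lie algebra by $\{V^2_n,iV^3_m\}$, it suffices to check skew-symmetry on these. For $V^2_n$, which are real multiples of $X_\pm,X_0$, this is the hypothesis on $\rho$. For $iV^3_m$ I must show $\rho(V^3_m)$ is \emph{symmetric}.

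The key computation is the behaviour of the Hilbert adjoint on the enveloping algebra. Because $\rho(X)^*=-\rho(X)$ for $X\in\mathfrak{sl}(2,\R)$, a length-$k$ monomial obeys $\big(\rho(X_{i_1})\cdots\rho(X_{i_k})\big)^*=(-1)^k\rho(X_{i_k})\cdots\rho(X_{i_1})$ on $\mathcal{D}$; equivalently $\rho(u)^*=\rho(u^\star)$, where $u\mapsto u^\star$ is the conjugate-linear anti-automorphism of $U(\mathfrak{sl}(2,\C))$ fixing the real structure and sending $X\mapsto -X$, i.e. $u^\star=\theta_0(S(u))$ with $S$ the antipode. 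I would then show $\rho(V^3_m)^*=\rho(V^3_m)$ by showing $(V^3_m)^\star=V^3_m$, i.e. $S(V^3_m)=V^3_m$. Since $S$ commutes with the adjoint $\mathfrak{sl}(2)$-action (as $S(\mathrm{ad}_X u)=\mathrm{ad}_X(S u)$) and $\hs^{(3)}_\C(\lambda)$ is the unique copy of its isotype, $S$ is a scalar on it; evaluating on the extreme vector $V^3_2=X_-^2$ gives $S(X_-^2)=(-X_-)^2=X_-^2$, so $S=+\mathbbm{1}$ on $\hs^{(3)}_\C(\lambda)$ (one may instead check the five basis elements directly, e.g. $S(V^3_1)=\tfrac12(X_-X_0-X_-)=\tfrac12(X_0X_-+X_-)=V^3_1$). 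As the $V^3_m$ are real, $\theta_0$ fixes them, so $(V^3_m)^\star=V^3_m$; hence $\rho(V^3_m)$ is symmetric and $\rho(iV^3_m)$ is skew-symmetric, completing the reduction.

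More conceptually, the same sign computation gives $S=(-1)^{s-1}\mathbbm{1}$ on $\hs^{(s)}_\C(\lambda)$ (from $S(X_-^{s-1})=(-1)^{s-1}X_-^{s-1}$), hence $-S=\theta_1$, so that $x\mapsto -x^\star=\theta_0\theta_1(x)$; the real form $\hs(\lambda)$ cut out by $\theta_0\theta_1$ is \emph{precisely} the set of $x$ with $x^\star=-x$, i.e. the elements represented by skew-symmetric operators. I expect the main subtlety to be analytic rather than algebraic: all adjoint manipulations above hold only as identities of sesquilinear forms on the invariant domain $\mathcal{D}$, so the conclusion is skew-\emph{symmetry} (formal skew-adjointness), not skew-self-adjointness. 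Upgrading to genuine skew-adjoint operators that exponentiate is a separate and much harder problem, and is exactly the business of the later sections. The one place where signs must be tracked carefully is the matching $\theta_1=-S$, where one must use $\epsilon_s=(-1)^s$ (so $\epsilon_2=+1$, $\epsilon_3=-1$) consistently with the commutator grading.
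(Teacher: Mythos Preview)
Your proof is correct and follows the same route as the paper: both reduce to the Lie-algebra generators $\{V^2_n,\, iV^3_m\}$, verify the extreme case $iV^3_2=iX_-^2$ directly (the square of a skew-symmetric operator is symmetric, so $i$ times it is skew-symmetric), and then propagate to the remaining $iV^3_m$ using closure of skew-symmetric operators under commutators with $\rho(X_+)$. You phrase the propagation via the antipode $S$ and Schur's lemma rather than by iterating $[X_+,\cdot]$ as the paper does, and this rewording buys you the clean final identification of $\hs(\lambda)$ with $\{x:x^\star=-x\}$, a fact the paper does not record explicitly.
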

\begin{proof}
$\rho(V^2_n)$ are skew-symmetric by hypothesis. Let us consider the generator $iV^3_2 = iX_-^2$. $\rho(iV^3_2) = i\rho(X_-)^2$. $\rho(X_-)$ is skew-symmetric, so $\rho(X_-)^2$ is symmetric and $i\rho(X_-)^2$ is skew-symmetric. $iV^3_m$ can be obtained by the repeated action of $[X_+, .]$ on $iV^3_2$. As the commutator of two skew-symmetric operators is skew-symmetric, $\rho(iV^3_m)$ is skew-symmetric. All the Lie algebra generators of $\hs(\lambda)$ are represented by skew-symmetric operators, so this is true for all of $\hs(\lambda)$.
\end{proof}
Remark that Lemma \ref{LemRepASAOp} does not hold for the real form of $\hs^{(3)}_\C(\lambda)$ obtained from $\theta_0$.

\section{Exponentiation in the adjoint representation}

\label{SecExpAdjRep}

Just like three-dimensional gravity, three-dimensional higher spin theories can be seen as Chern-Simons theories. In the case of higher spin, the ${\rm sl}(2,\mathbbm{R}) \oplus {\rm sl}(2,\mathbbm{R})$ connection of three-dimensional gravity is replaced by a $\mathfrak{g} \oplus \mathfrak{g}$-valued connection, for some real Lie algebra $\mathfrak{g}$ admitting ${\rm sl}(2,\mathbbm{R})$ as a subalgebra \cite{oai:arXiv.org:1008.4744, oai:arXiv.org:1107.0290}. For the theories of interest to us, $\mathfrak{g} = \hs(\lambda)$. One of the reason to try to construct an exponential map for $\hs(\lambda)$ is therefore to understand finite gauge transformations of $\hs(\lambda)$-valued connections. Clearly, a hypothetic group exponentiating $\hs(\lambda)$ would act on the latter via the adjoint representation, by inner automorphisms. It is therefore a good idea to determine the group of inner automorphisms of $\hs(\lambda)$.

$\hs(\lambda)$ is an infinite-dimensional Lie algebra. Without introducing an extra structure, such as a topology or a norm, only finite linear combinations of generators are meaningful. As we explain in Section \ref{SecCompl}, we do not know any consistent completion of $\hs(\lambda)$ which would allow us to consider infinite linear combinations of generators. (The existence of such a completion will however be suggested by our results in Section \ref{SecExpCompSeries}). Consequently, in this section only finite sums of generators of $\hs(\lambda)$ are considered meaningful.

In this setting, the automorphisms of $B(\lambda)$ as an associative algebra were studied in an early paper by Dixmier \cite{Dixmier1973}. We will summarize here his results and deduce from them the inner automorphism group of $\hs(\lambda)$.

\subsection{Classification of the exponentiable elements in the adjoint representation}

\label{SecClassExpElAdjRep}

In our restricted setup where we allow only for finite linear combinations of generators, an element $X \in B(\lambda)$ is exponentiable only if all its adjoint orbits are finite-dimensional. In other words, let ${\rm ad}_X$ denote the adjoint action of $X$ on $B(\lambda)$: ${\rm ad}_X(Y) = [X,Y]$. Define $F(X)$ as the set of $Y \in B(\lambda)$ such that the span of $(ad_X)^n$, $n \in \mathbbm{N}$ is finite-dimensional. Then $X$ is exponentiable if and only if $F(X) = B(\lambda)$. Dixmier shows that there are exactly two types of exponentiable elements. 
\begin{enumerate}
	\item $X$ is a \textit{strictly semi-simple} element if ${\rm ad}_X$ can be diagonalized (by eigenvectors which are finite linear combinations of the generators). An example of a strictly semi-simple element is provided by $V^2_0$. Indeed, from \eqref{EqCommRelV2m} we have $[V^2_0, V^s_n] = -n V^s_n$.  
	\item $X$ is a \textit{strictly nilpotent} element if ${\rm ad}_X$ is locally nilpotent, i.e. if for each $Y \in B(\lambda)$, there is an $n \in \mathbbm{N}$ such that $({\rm ad}_X)^n(Y) = 0$. $V^2_1$ is a strictly nilpotent element, because $({\rm ad}_{V^2_1})^{s-n}(V^s_n) = 0$.
\end{enumerate}
If $X$ is strictly nilpotent, then any polynomial $P(X)$ in $X$ is strictly nilpotent as well. The example above shows that $V^s_{s-1} = (X_-)^{s-1}$ and $V^s_{-s+1} = (X_+)^{s-1}$ are all strictly nilpotent, and therefore exponentiate. We write
\be
\Phi_{n,\mu} = \exp {\rm ad}(\mu X_-^n) \;, \quad \Psi_{n,\mu} = \exp {\rm ad}(\mu X_+^n) \;.
\ee

\subsection{The group of inner automorphisms}

\label{SecGrpInnAut}

Dixmier's main results are summarized by the following theorem:
\begin{theorem}
(Dixmier, \cite{Dixmier1973}) All the (associative algebra) automorphisms of $B(\lambda)$ are inner and the group of automorphisms ${\rm Aut}(B(\lambda))$ is generated by the elements $\Phi_{n,\mu}$, $\Psi_{n,\mu}$, $n > 0$, $\mu \in \mathbbm{C}$.
\end{theorem}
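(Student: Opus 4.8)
The plan is to prove both assertions at once by showing that any automorphism $\phi \in {\rm Aut}(B(\lambda))$ can be reduced to the identity after composition with an element of the group $G$ generated by the $\Phi_{n,\mu}$ and $\Psi_{n,\mu}$; since the generators of $G$ are inner, this will give $\phi \in G$ and hence the classification. A preliminary observation I will use throughout is that $G$ already contains the adjoint action of all of $SL(2,\C)$: the $n=1$ generators $\Phi_{1,\mu} = \exp{\rm ad}(\mu X_-)$ and $\Psi_{1,\mu} = \exp{\rm ad}(\mu X_+)$ exponentiate the two nilpotent subalgebras, and since $SL(2,\C)$ is generated by its unipotent subgroups, $G$ contains in particular the torus $\exp(s\,{\rm ad}\,X_0)$ and the Weyl element exchanging $X_+$ with $X_-$.

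The backbone is the $\Z$-grading $B(\lambda) = \bigoplus_{k \in \Z} B(\lambda)_k$ by the eigenvalues of ${\rm ad}_{V^2_0}$, under which $V^s_n$ has weight $-n$ by \eqref{EqCommRelV2m}, so that $X_\pm^n$ has weight $\pm n$. The decisive first step is to normalise $\phi$ so that it fixes $V^2_0$. Since ${\rm ad}_{\phi(V^2_0)} = \phi \circ {\rm ad}_{V^2_0} \circ \phi^{-1}$, the image $\phi(V^2_0)$ is again strictly semi-simple and has exactly the same adjoint spectrum and multiplicities as $V^2_0$, namely $\Z$ with the dimensions of the $B(\lambda)_k$. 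I would then invoke the conjugacy statement that lies at the heart of Dixmier's result: every strictly semi-simple element of $B(\lambda)$ with this spectrum is $G$-conjugate to $\pm V^2_0$, the sign being absorbed by the Weyl element. Granting this, we replace $\phi$ by $g\phi$ for a suitable $g \in G$ and may assume $\phi(V^2_0) = V^2_0$.

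With $V^2_0$ fixed, $\phi$ preserves the grading and in particular stabilises the weight space $B(\lambda)_{-1} = {\rm span}\{V^s_1 : s \geq 2\}$. By the classification of strictly nilpotent elements recalled in Section \ref{SecClassExpElAdjRep}, $X_- = V^2_1$ is, up to scale, the unique \emph{homogeneous} strictly nilpotent element of weight $-1$ (a nonzero higher-spin component $V^s_1$, $s \geq 3$, raises the spin without bound under iterated ${\rm ad}$ and so obstructs local nilpotency). As $\phi$ preserves strict nilpotency, it must send $X_- \mapsto c_- X_-$ and, symmetrically, $X_+ \mapsto c_+ X_+$ with $c_\pm \neq 0$. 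Applying $\phi$ to $[X_+,X_-] = X_0 = 2V^2_0$ yields $c_+ c_- = 1$, so a single torus element $\exp(s\,{\rm ad}\,X_0) \in G$ can be chosen to restore $\phi(X_\pm) = X_\pm$ as well. At this point $\phi$ fixes $\hs^{(2)}_\C(\lambda)$ pointwise, and since $X_+, X_0, X_-$ generate $B(\lambda)$ as an associative algebra, $\phi = {\rm id}$. Unwinding the two normalisations exhibits the original $\phi$ as a product of elements of $G$, which is the claim.

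The main obstacle is plainly the conjugacy statement used to fix $V^2_0$, i.e.\ bringing an arbitrary strictly semi-simple element with the correct spectrum to $\pm V^2_0$. I would attack it through the PBW filtration of $B(\lambda)$, whose associated graded is the commutative coordinate ring of the Casimir quadric: writing the candidate element through its top-degree symbol and removing the off-diagonal corrections order by order with the locally nilpotent exponentials $\Phi_{n,\mu}$, $\Psi_{n,\mu}$ — a normal-form argument in the spirit of a Jordan decomposition — should drive it onto the torus, after which the spectrum forces it to $\pm V^2_0$. The delicate points are controlling the termination of this reduction in the infinite-dimensional setting and checking that every adjustment genuinely lies in $G$ rather than in some larger completion of the automorphism group.
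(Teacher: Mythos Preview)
The paper does not prove this theorem at all: it is quoted as Dixmier's result with a citation to \cite{Dixmier1973} and used as a black box, so there is no proof in the paper to compare your proposal against.

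As a self-contained attempt, your outline is a reasonable strategy, and you are right that the crux is the conjugacy of strictly semi-simple elements. Two remarks. First, the step where you assert that $X_-$ is ``up to scale the unique homogeneous strictly nilpotent element of weight $-1$'' is not justified by what the paper recalls in Section~\ref{SecClassExpElAdjRep}: that section only records that polynomials in a strictly nilpotent element are again strictly nilpotent, not any uniqueness or classification within a fixed weight space. The heuristic ``a higher-spin component raises the spin without bound'' is not a proof; you would need to show, for instance, that no combination $X_- + \sum_{s\geq 3} a_s V^s_1$ is locally nilpotent, and this requires real work (or an appeal to Dixmier's actual classification of locally nilpotent derivations of $B(\lambda)$). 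Second, the ``main obstacle'' you flag --- bringing an arbitrary strictly semi-simple element with the right spectrum to $\pm V^2_0$ by a product of $\Phi_{n,\mu}$, $\Psi_{n,\mu}$ --- is not a residual technicality but essentially the whole theorem; your PBW-filtration normal-form idea is indeed the natural line of attack, but making the reduction terminate and stay inside $G$ is precisely what Dixmier's paper establishes. So your proposal is a coherent roadmap, but both the conjugacy step and the weight-$(-1)$ uniqueness step are genuine gaps that amount to re-proving the cited result.
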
 
The group of inner Lie algebra automorphisms of $\hs_{\mathbbm{C}}(\lambda)$ coincides with ${\rm Aut}(B(\lambda))$, from which we obtain easily
\begin{theorem}
\label{ThGrInnAutHSL}
The group of inner automorphisms ${\rm Aut}_{\rm in}(\hs(\lambda))$ of $\hs(\lambda)$ is given by the real form of ${\rm Aut}(B(\lambda))$ left invariant by $\theta_0 \theta_1$ (see Section \ref{SecRealForms}). It is generated by $\Phi_{2n,\mu}$, $\Psi_{2n,\mu}$, $\Phi_{2n+1,i\mu}$, $\Psi_{2n+1,i\mu}$, $n >0$, $\mu \in \mathbbm{R}$.
\end{theorem}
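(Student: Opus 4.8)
The plan is to treat the statement as a Galois-descent computation: realize $\mathrm{Aut}_{\mathrm{in}}(\hs(\lambda))$ as the fixed-point subgroup of the antilinear involution $\sigma := \theta_0\theta_1$ acting on Dixmier's group $\mathrm{Aut}(B(\lambda))$, and then read off its generators from Dixmier's. First I would let $\sigma$ act on $\mathrm{Aut}(B(\lambda))$ by conjugation, $\phi\mapsto\sigma\circ\phi\circ\sigma^{-1}$; since $\phi$ is $\C$-linear and $\sigma$ is antilinear, the conjugate is again a $\C$-linear automorphism, and $\phi\mapsto\sigma\phi\sigma^{-1}$ is an involutive automorphism of the group. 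The elementary observation driving the first assertion is that a $\C$-linear automorphism $\phi$ restricts to an $\R$-linear automorphism of the real form $\hs(\lambda)=\ker(\sigma-\mathrm{id})$ precisely when $\sigma\phi=\phi\sigma$: if $\sigma x=x$ and $\phi$ commutes with $\sigma$ then $\sigma(\phi x)=\phi x$, and the converse follows by $\C$-linearity from $\hs(\lambda)\otimes_{\R}\C=\hs_\C(\lambda)$. Combined with the theorem of Dixmier stated above (all automorphisms of $B(\lambda)$ are inner and exhaust the inner automorphisms of $\hs_\C(\lambda)$), this identifies $\mathrm{Aut}_{\mathrm{in}}(\hs(\lambda))$ with the $\sigma$-fixed subgroup, i.e. with the real form of $\mathrm{Aut}(B(\lambda))$ left invariant by $\theta_0\theta_1$.

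Next I would determine which of Dixmier's one-parameter generators are $\sigma$-fixed. Because $\sigma$ is a Lie-algebra automorphism, $\sigma\,\mathrm{ad}_Z\,\sigma^{-1}=\mathrm{ad}_{\sigma(Z)}$, whence $\sigma\exp(\mathrm{ad}_Z)\sigma^{-1}=\exp(\mathrm{ad}_{\sigma Z})$ (the real coefficients $1/k!$ are untouched by the antilinear $\sigma$). As $\hs_\C(\lambda)$ has trivial center, $\mathrm{ad}$ is injective, and since $Z=\mu X_{-}^{n}$ acts locally nilpotently one may take logarithms; hence $\sigma\Phi_{n,\mu}\sigma^{-1}=\Phi_{n,\mu}$ if and only if $\sigma(\mu X_-^n)=\mu X_-^n$, and likewise for $\Psi_{n,\mu}$. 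Now $X_-^n=V^{n+1}_n$ sits in the spin-$(n{+}1)$ sector $\hs^{(n+1)}_\C(\lambda)$, on which $\theta_1$ acts by $(-1)^{\,n+1}$ while $\theta_0$ merely conjugates the real coefficient; thus $\sigma$ scales the parameter by $\mu\mapsto(-1)^{\,n+1}\bar\mu$, and the fixed condition $\mu=(-1)^{\,n+1}\bar\mu$ forces $\mu$ to be real for one parity of $n$ and purely imaginary for the other. This reproduces the alternation between $\Phi_{n,\mu}$ and $\Phi_{n,i\mu}$ (and their $\Psi$ counterparts) recorded in the statement, so each listed one-parameter subgroup is $\sigma$-fixed and therefore lies in $\mathrm{Aut}_{\mathrm{in}}(\hs(\lambda))$.

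The substantive point, which I expect to be the main obstacle, is the reverse inclusion: that these $\sigma$-fixed one-parameter subgroups \emph{generate} the whole $\sigma$-fixed subgroup, and not merely a proper subgroup of it. This does not follow formally, since the fixed-point subgroup of an involution is in general strictly larger than the subgroup generated by the fixed elements of a generating set. To control it I would exploit the finer structure of $\mathrm{Aut}(B(\lambda))$ supplied by Dixmier's analysis: in parallel with the automorphism groups of the first Weyl algebra and of the affine plane, one can present it as an amalgamated free product whose vertex groups are built respectively from the $\Phi_{n,\mu}$ and the $\Psi_{n,\mu}$ together with the $SL(2,\R)$ part. The involution $\sigma$ respects this factorization, acting on each vertex group through the parameter rule $\mu\mapsto(-1)^{\,n+1}\bar\mu$ found above. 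I would then take an arbitrary $\sigma$-fixed automorphism, put it in Dixmier normal form $\phi=g_1\cdots g_k$, apply $\sigma$ to obtain a second normal form $\phi=(\sigma g_1\sigma^{-1})\cdots(\sigma g_k\sigma^{-1})$, and invoke uniqueness of the normal form (Bass--Serre theory) to match the two syllable by syllable; this matching forces each syllable to be $\sigma$-fixed, hence a product of the listed real- and imaginary-parameter generators. The technical heart is thus importing the amalgam structure from Dixmier and checking that $\sigma$ is adapted to it; once that is in place, the determination of the fixed subgroup is formal.
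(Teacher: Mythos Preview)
The paper does not actually prove this theorem: it only remarks that the inner Lie-algebra automorphisms of $\hs_\C(\lambda)$ coincide with $\mathrm{Aut}(B(\lambda))$ and then says the result is obtained ``easily'' from Dixmier's theorem. Your plan is therefore considerably more detailed than what the paper provides.

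The descent framework and the computation of which Dixmier generators are $\sigma$-fixed are correct, and are presumably what the author has in mind as the easy part. Your observation that $X_-^{\,n}=V^{n+1}_{n}\in\hs^{(n+1)}_\C(\lambda)$, on which $\theta_1$ acts by $(-1)^{\,n+1}$, yields $\mu$ real for $n$ odd and $\mu\in i\R$ for $n$ even. (This is in fact the parity that makes $\Phi_{1,\mu}$, $\Psi_{1,\mu}$ with real $\mu$ generate the $PSL(2,\R)$ subgroup the paper notes immediately after the theorem; the parity as printed in the statement appears to be swapped, and the range $n>0$ also omits these index-$1$ generators. Your computation is the correct one.)

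You are also right to flag the reverse inclusion as the substantive step: fixed points of an involution on a group are not in general generated by the fixed members of a generating set, so one needs structure specific to $\mathrm{Aut}(B(\lambda))$. The amalgamated-free-product description you propose, paralleling $\mathrm{Aut}(A_1)$ and the Jung--van der Kulk theorem for polynomial automorphisms of the plane, is the right tool, and the syllable-by-syllable matching via uniqueness of normal forms is the standard way to compute fixed-point subgroups of involutions on amalgams. This goes genuinely beyond what the paper supplies, which simply asserts the conclusion.
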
 
${\rm Aut}_{\rm in}(\hs(\lambda))$ contains $PSL(2,\mathbbm{R})$ (the adjoint group of $SL(2,\mathbbm{R})$) as a subgroup, as it should. 

\subsection{Non-exponentiable elements}

\label{SecNonExpElAdj}

Not all of the elements of $\hs(\lambda)$ exponentiate to an element of ${\rm Aut}_{\rm in}(\hs(\lambda))$. As a simple example, consider $iV^3_0$. We have
\be
[iV^3_0, V^2_{-1}] = 2iV^3_{-1} = iV^2_{-1}(2V^2_0 + 1) \;, \quad [V^3_0, V^2_0] = 0 \;,
\ee
so by a simple recurrence argument,
\be
({\rm ad}_{iV^3_0})^n(V^2_{-1}) = i^n V^2_{-1} (2V^2_0 + 1)^n \;.
\ee
Clearly, the adjoint action of $iV^3_0$ on $V^2_{-1}$ generates an infinite-dimensional subspace of $\hs(\lambda)$, and $\exp ({\rm ad}_{iV^3_0})(V^2_{-1})$ cannot be expressed as a finite sum of generators. A similar argument applies to the adjoint action of the generators $i^sV^s_0$ on $V^2_{-1}$. It might seem that this problem should be easily solved by considering an appropriate completion of $\hs(\lambda)$. We explain why constructing such a completion is not a simple task in Section \ref{SecCompl} below. 

Most of the time, it is hard or impossible to obtain a closed form for the power of the adjoint action of a generator. In \cite{Dixmier1973}, Dixmier provided a sufficient condition for an element to fail to exponentiate which is easy to check. The condition involves the familiar star product representation \cite{Vasiliev:1992gr}. In other terms, recall that the one-dimensional Heisenberg-Weyl algebra $A_1$ is the free algebra generated by the symbols $\{x, \partial\}$ quotiented by the relation $[\partial, x] = 1$. It coincides with the algebra of polynomial differential operators on $\R$. We have a homomorphism $\phi$ of $\mathfrak{sl}(2,\R)$ into $A_1$, given explicitly by
\be
\phi(X_+) = -\frac{1}{2}x^2 \;,\quad \phi(X_0) = x\partial + \frac{1}{2} \;, \quad \phi(X_-) = \frac{1}{2}\partial^2
\ee
inducing a homomorphism (still written $\phi$) of $\hs(\lambda)$ into $A_1$. Lemma 5.2 of \cite{Dixmier1973} can be reformulated as follows:
\begin{theorem}
\label{TheExpAdjActDix}
Let $X \in \hs(\lambda)$ and $\phi(X) = \sum_{i,j} \alpha_{ij} x^i \partial^j$. Let $r,s$ be the smallest non-negative integers such that $\alpha_{i0} = 0$ for $i > r$ and $\alpha_{0j} = 0$ for $j > s$. Assume that there are integers $\tilde{i},\tilde{j}$ such that 
\begin{enumerate}
	\item $\alpha_{\tilde{i}\tilde{j}} \neq 0$,
	\item $(\tilde{i},\tilde{j}) \neq (1,1)$,
	\item $s\tilde{i} + r\tilde{j} > rs$, or $\tilde{i},\tilde{j} \neq 0$ if $r = s = 0$.
\end{enumerate}
Then $\{({\rm ad}_{X})^n(V^2_{-1})\}_{n \in \mathbb{N}}$ generates an infinite-dimensional subspace of $\hs(\lambda)$ and $X$ does not exponentiate.
\end{theorem}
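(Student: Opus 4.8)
The plan is to detect the infinite-dimensionality at the level of symbols. Passing to the associated graded of $B(\lambda)$ with respect to its natural filtration (which, unlike $\hs(\lambda)$ itself, is a commutative Poisson algebra isomorphic to $\C[q,p]^{\mathrm{even}}$, the even polynomials, \emph{independently of} $\lambda$), the adjoint action $\mathrm{ad}_X = [X,\,\cdot\,]$ becomes the Poisson bracket $\{\sigma(X),\,\cdot\,\}$ with the symbol $\sigma(X)$, wherever the latter is nonzero. Under this identification $\sigma(V^2_{-1}) = \tfrac{1}{2}q^2$, and $\sigma(X)$ is read off from $\phi(X) = \sum_{ij}\alpha_{ij}x^i\partial^j$ as $\sum_{ij}\alpha_{ij}q^ip^j$ (so every monomial that occurs has $i+j$ even). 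It therefore suffices to show that the symbols of $(\mathrm{ad}_X)^n(V^2_{-1})$ have \emph{unbounded degree} in $(q,p)$: symbols of distinct degrees are linearly independent, so this already forces $\{(\mathrm{ad}_X)^n(V^2_{-1})\}_{n\in\N}$ to span an infinite-dimensional subspace.

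\textbf{Isolating a dominant monomial.} The basic computation is
\be
\{q^a p^b,\,q^i p^j\} = (aj-bi)\,q^{a+i-1}p^{b+j-1},
\ee
so bracketing against a single monomial $q^ap^b$ of $\sigma(X)$ translates the exponent by the fixed vector $(a-1,b-1)$. To pick out the ``most dangerous'' monomial I would weight the exponent lattice by a positive linear functional $w(i,j)=\omega_x i + \omega_y j$ adapted to the Newton polygon of $\sigma(X)$, so that the per-step growth rate $w(a-1,b-1)$ is maximized at a single vertex $(\tilde i,\tilde j)$ and is strictly positive there. Hypotheses (1)--(3) are exactly what guarantees such a vertex: condition (3) places $(\tilde i,\tilde j)$ strictly above the line $s\,i + r\,j = rs$ through the extreme on-axis monomials $x^r,\partial^s$ (the only purely-$x$ or purely-$\partial$ terms of $\sigma(X)$), which forces $\tilde i,\tilde j\geq 1$, while condition (2) excludes $(\tilde i,\tilde j)=(1,1)$, whose translation vector $(0,0)$ would produce no growth. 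Hence $(\tilde i-1,\tilde j-1)$ has nonnegative entries, not both zero, and $w(\tilde i-1,\tilde j-1)>0$. The degenerate case $r=s=0$ (no on-axis terms except constants, as for $iV^3_0$) is treated the same way using the plain degree $w=i+j$, the clause ``$\tilde i,\tilde j\neq0$'' of (3) playing the role of the inequality.

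\textbf{Propagation.} With $w$ chosen so that the vertex strictly dominates, the top-$w$-graded part of $(\mathrm{ad}_X)^n(V^2_{-1})$ can only be produced by using $(\tilde i,\tilde j)$ at every step; if nonzero it equals the $n$-fold Poisson bracket of $\tfrac{1}{2}q^2$ against $\alpha_{\tilde i\tilde j}q^{\tilde i}p^{\tilde j}$, whose exponent is $\big(2+n(\tilde i-1),\,n(\tilde j-1)\big)$ with $w$-degree growing linearly in $n$. Since $\tilde j\geq1$, the very first Poisson factor $-2\tilde j$ is nonzero and the chain starts; as long as the accumulated scalar (a product of factors $\tilde i\,j_m-\tilde j\,i_m$) stays nonzero, the iterates acquire strictly increasing top $w$-degree, are linearly independent, and we are done.

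\textbf{Main obstacle.} The delicate point is precisely the survival of this leading term. Ruling out competing monomials that reach the same top $w$-degree is handled by choosing $w$ generically so the vertex is the \emph{strict} maximizer. The genuinely hard part is that the Poisson coefficient $\tilde i\,j_m-\tilde j\,i_m$ can vanish at an isolated step $m$ (this already happens, e.g., for the vertex $(2,4)$, where it vanishes at $m=4$), truncating the pure-vertex chain even though the exponents stay in the even lattice. Showing that the top symbol nonetheless survives for arbitrarily large $n$ --- by refining the choice of vertex or of an entire edge of the Newton polygon, by retaining the lower-order Weyl-algebra corrections beyond the Poisson bracket, or by reseeding the iteration from the last nonzero iterate --- is the technical core of Dixmier's Lemma 5.2, and is where the precise inequality in (3) and the evenness constraint do the real work.
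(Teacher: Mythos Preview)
The paper does not actually prove this theorem: it is stated as a reformulation of Lemma~5.2 of Dixmier's paper \cite{Dixmier1973} and the proof is deferred to that reference. So there is no ``paper's own proof'' to compare against; what I can do is assess your proposal on its own merits.

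Your overall strategy---pass to the associated graded (the Poisson algebra of symbols), pick a weight adapted to the Newton polygon of $\sigma(X)$ so that a single vertex $(\tilde i,\tilde j)$ strictly dominates, and then track the growth of the top-$w$-degree term of $(\mathrm{ad}_X)^n(V^2_{-1})$---is sound and is indeed the natural line of attack. The identification of the symbol algebra with $\C[q,p]^{\mathrm{even}}$, the Poisson-bracket formula, and the reading of hypotheses (1)--(3) as singling out a vertex with $\tilde i,\tilde j\geq 1$ and $(\tilde i,\tilde j)\neq(1,1)$ are all correct.

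However, the proposal is not a proof: you yourself flag the obstruction and then do not resolve it. The scalar factor $\tilde i\,j_m-\tilde j\,i_m$ can vanish at some step (your example $(\tilde i,\tilde j)=(2,4)$, $m=4$ is genuine), which kills the pure-vertex chain you are following. At that point one of three things must be shown: that a different vertex/edge of the Newton polygon takes over, that the subleading Weyl-algebra corrections resurrect a nonzero top term, or that reseeding from the last nonzero iterate still produces unbounded degree. You list these options but do not carry any of them out, and this is precisely where the work lies: without it, the argument does not exclude the possibility that the sequence of symbols eventually has bounded degree. The suggestion ``reseed from the last nonzero iterate'' in particular needs care, because after a cancellation the new top monomial need not sit at a vertex of the polygon with the same favourable properties, so the inductive hypothesis has to be strengthened. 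As written, the proposal is a clear outline with the technical core explicitly left open.
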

For example, we have
\be
V^3_1 = \frac{1}{8}(2x\partial^3 + 3\partial^2) \;, \quad (r,s) = (0,3) \;. 
\ee
Taking $(\tilde{i},\tilde{j}) = (1,3)$, we see that $V^3_1$ does not exponentiate. Clearly, the conditions of Theorem \ref{TheExpAdjActDix} are not necessary, as $V^3_0$ does not satisfy them.

\subsection{Completions}

\label{SecCompl}

As the discussion in the previous section has made clear, there are many elements of $\hs(\lambda)$ which do not exponentiate when we consider only finite linear combinations of generators. One might want therefore to consider a completion $\overline{\hs}(\lambda) \supset \hs(\lambda)$ allowing for certain infinite linear combinations of generators. It is reasonable to require the following properties from such a completion:
\begin{itemize}
	\item $\overline{\hs}(\lambda)$ is a Lie algebra, i.e. the Lie bracket of $\hs(\lambda)$ extended linearly to $\overline{\hs}(\lambda)$ is well-defined for any two elements of $\overline{\hs}(\lambda)$ and belongs to $\overline{\hs}(\lambda)$.
	\item The group of inner automorphisms of $\overline{\hs}(\lambda)$ contains ${\rm Aut}_{\rm in}(\hs(\lambda))$, i.e. all the elements exponentiating in $\hs(\lambda)$ exponentiate as well in $\overline{\hs}(\lambda)$. In particular, this implies that there is an action of $SL(2,\mathbbm{R})$ on $\overline{\hs}(\lambda)$ extending the corresponding adjoint action on $\hs(\lambda)$.
\end{itemize}
The most obvious way of achieving this would be to define an algebra norm on the associative algebra $B(\lambda)$ and to take the completion with respect to this norm, thereby turning it into a Banach algebra. Recall that an algebra norm is a norm such that $|XY| \leq |X||Y|$ for all elements $X,Y$, which immediately implies that the exponential of any element of finite norm has finite norm as well. Unfortunately, a simple argument \cite{Luminet1987} shows that no such norm can exist. Indeed, assuming its existence, the relation $[X_0,X_+^n] = 2n X_+^n$, we deduce that $2|X_0||X_+^n| \geq 2n|X_+^n|$ for all $n$, a contradiction.

Completions of enveloping algebras of arbitrary Kac-Moody algebras have been considered in \cite{MR735060, Alekseev:2007in}. Applied to $\mathfrak{sl}(2,\mathbbm{R})$, this amounts to consider series of the form
\be
\sum_{k,l=0}^\infty X_-^k \phi_{kl}(X_0) X_+^l \;,
\ee
where $\phi_{km}$ are analytic functions which vanish for $k-l > c$, for some positive integer $c$. The latter condition is essential to ensure that the commutator of two such series yields a series of the same type with finite coefficients. Unfortunately, it is not preserved by the adjoint action of $SL(2,\mathbbm{R})$. For instance, the rotation $\exp \theta (X_+ - X_-) \in SL(2,\mathbbm{R})$ does not have a well-defined adjoint action on this completion.

These difficulties provide us with the motivation to follow an alternative approach, which we will undertake in the next section.

\section{Exponentiation in complementary series representations}

\label{SecExpCompSeries}

Recall that we showed in Lemma \ref{LemRepASAOp} that given any irreducible representation of ${\rm sl}(2,\mathbbm{R})$ by skew-symmetric operators with $\rho(\Omega) = \frac{1}{2}(\lambda^2-1) \mathbbm{1}$, then $\rho$ provides a representation of $\hs(\lambda)$ by skew-symmetric operators. By starting from a unitary representation of $SL(2,\mathbbm{R})$, in which the generators ${\rm sl}(2,\mathbbm{R})$ are represented by skew-adjoint operators, one may hope that elements in the induced representation of $\hs(\lambda)$ are represented by skew-adjoint operators as well. The spectral theorem ensures that skew-adjoint operators admit a functional calculus and their exponentials are always well-defined unitary operators. As we will see, the story is more complicated than this, but we will find a setup in which the exponentiable elements of $\hs(\lambda)$ form a dense subspace. 

In Sections \ref{SecExpElComp} to \ref{SecPropGrHSSym}, we assume that the reader is familiar with the material reviewed in Appendix \ref{SecRevMathRes}.

\subsection{Some facts about the representation theory of $SL(2,\mathbbm{R})$}

\label{SecBasFactRepTh}

The following material can for instance be found in Chapter II of \cite{knapp2001representation}. $SL(2,\mathbbm{R})$ is the group of real $2\times 2$ matrices with unit determinant. Let $\lambda = u + iv \in \C$ and 
\be
\gamma = \left( \begin{array}{cc} a & b \\ c & d \end{array} \right) \in SL(2,\R) \;.
\ee
Given a complex valued function $f$ on $\R$, consider the right action 
\be
\label{EqActNUPRSSL2R}
\rho^{\pm,\lambda}(\gamma)f(x) = |-bx + d|^{-1-\lambda} \, {\rm sgn}(-bx + d)^{(1 \mp 1)/2} \, f \left(\frac{ax-c}{-bx+d}\right) \;.
\ee
For complex valued functions $f$ and $g$ on $\R$, define the following hermitian form: 
\be
(f,g)_p = \int_\R f(x) \bar{g}(x) (1 + |x|^2)^u dx \;.
\ee
Let $\mathcal{H}_{p,\lambda}$ be the Hilbert space of $L^2$-integrable functions with respect to $(.,.)_p$. The action $\rho^{\pm,\lambda}$ of $SL(2,\R)$ on $\mathcal{H}_{p,\lambda}$ forms the non-unitary principal series $\mathscr{P}^{\pm,\lambda}$ of representations of $SL(2,\R)$. These representations are unitary only for $u = 0$ (i.e. $\lambda$ imaginary). Alternatively, for $0 < \lambda < 1$ real, consider the hermitian form 
\be
\label{EqSesqFormComplSer}
(f,g)_c = \int_{\R^2} \frac{f(x) \bar{g}(y)}{|x-y|^{1-\lambda}} dx dy
\ee
and let $\mathcal{H}_\lambda$ be the associated Hilbert space of $L^2$-integrable functions. Then the action $\rho^{+,\lambda}$ on $\mathcal{H}_\lambda$ is unitary, and forms the complementary series $\mathscr{C}^\lambda$ of representations of $SL(2,\R)$.

We can readily compute the infinitesimal action of a generator $X \in \mathfrak{sl}(2,\R)$ associated to $\rho^{\pm,w}$ by
\be
\rho^{\pm,\lambda}(X) f(x) = \left. \frac{d}{dt} \rho^{\pm,\lambda}(\exp tX)f(x) \right |_{t = 0} \;, \quad t \in \mathbb{R} \;.
\ee 
For the elementary generators above, we obtain
\begin{align}
\label{EqActInfsl2PrinSer}
\rho^{\pm,\lambda}(X_+) f(x) = & \; (\lambda+1) x f(x) + x^2 \partial f(x) \;, \notag  \\
\rho^{\pm,\lambda}(X_0) f(x) = & \; (\lambda+1)f(x) + 2x \partial f(x) \;, \\
\rho^{\pm,\lambda}(X_-) f(x) = &  -\partial f(x) \;,  \notag 
\end{align}
where we wrote $\partial = \frac{d}{dx}$. Note that the infinitesimal action is independent of the label $\pm$, and we will simply denote it by $\rho^\lambda$. In the representation $\mathscr{P}^{\pm,\lambda}$, $x^\dagger = x$ and $\partial^\dagger = -\partial$, which makes it clear that the operators \eqref{EqActInfsl2PrinSer} are skew-symmetric provided $\lambda$ is imaginary. In the representation $\mathscr{C}^\lambda$, we have formally $\partial^\dagger = -\partial$, $x^\dagger = x + \lambda \partial^{-1}$, which allows one to check as well that \eqref{EqActInfsl2PrinSer} are skew-symmetric. 

The Casimir operator is $\rho^{\lambda}(\Omega) = \frac{1}{2}(\lambda^2-1)$. By Lemma \ref{LemRepASAOp}, we see that $\mathscr{P}^{\pm,\lambda}$ for $\lambda$ imaginary, and $\mathscr{C}^\lambda$ for $0 < \lambda < 1$, provide representations of $B(\lambda)$ and $\hs(\lambda)$ by skew-symmetric operators.

\subsection{Physical motivation}

\label{SecPhysMot}

We present here a geometrical interpretation of the higher spin symmetry which supports the physical relevance of the complementary series of representations. We gathered it in essence from \cite{Bekaert:2008sa} (see also \cite{Vasiliev:1996hn,Leigh:2014tza}). 

Consider the higher spin theories relevant to the Gaberdiel-Gopakumar duality \cite{Gaberdiel:2010pz, Gaberdiel:2012uj}. Their field content consists in an infinite tower of gauge fields of spin 2,3,..., as well as a scalar field. The collection of gauge fields is encoded in an $\hs(\lambda) \oplus \hs(\lambda)$-valued connection on the $AdS_3$ spacetime, and the mass of the scalar is related to the parameter $\lambda$ by $m^2 = \lambda^2-1$, in units in which the AdS radius is 1. In AdS, the squared mass of a scalar field can be negative without creating instabilities, as long as it is higher than the Breitenlohner-Freedman bound \cite{Breitenlohner:1982jf}, which in three dimensions reads $m^2 > -1$. As representations in the principal series have $\lambda$ imaginary, they are incompatible with the bound. However, representations in the complementary series have a chance to be relevant, as they are associated with a negative mass squared scalar compatible with the bound. In fact, in the holographic construction, we have
\be
\lambda = \frac{N}{N+k}
\ee
where $N$ and $k$ are the rank and the level of a two-dimensional conformal field theory, requiring therefore $0 \leq \lambda \leq 1$. As was mentioned in a previous footnote, there are unitary representations of $SL(2,\R)$ with $\lambda = 0$ or $1$, but from now on we will focus on the complementary series and assume that $0 < \lambda < 1$.

In order to understand better how these representations originate, we need to understand the global symmetry of the scalar field that is gauged by the $\hs(\lambda) \oplus \hs(\lambda)$-valued connection. Recall that up to global issues, $AdS_3$ can be seen as the homogeneous space 
\be
(SL(2,\mathbbm{R}) \times SL(2,\mathbbm{R}))/SL(2,\mathbbm{R}) \;,
\ee
where the action is the antidiagonal one: $g.(g_1,g_2) = (gg_1, g^{-1}g_2)$. Fixing a point $p$ on $AdS_3$ and identifying it with the coset of the identity, the Lie algebra action of ${\rm sl}(2,\mathbbm{R}) \oplus {\rm sl}(2,\mathbbm{R})$ acts on a neighborhood of $p$ by infinitesimal spacetime translations, rotations and Lorentz boosts. More precisely, the infinitesimal translations, which do not preserve $p$, are associated with axial elements of the form $(x,x) \in {\rm sl}(2,\mathbbm{R}) \oplus {\rm sl}(2,\mathbbm{R})$, while the infinitesimal rotations and boosts, which leave $p$ fixed, are associated with adjoint elements $(x,-x)$. The vector fields on $AdS_3$ associated with these transformations can be identified with the images of the left invariant vector fields of the corresponding Lie algebra elements on $SL(2,\mathbbm{R}) \times SL(2,\mathbbm{R})$. The Laplacian $\Delta$ on a homogeneous space $G/H$ is given by minus the difference of the quadratic Casimirs $C_G - C_H$, where the Casimirs are seen as differential operators of degree two on the group manifold through the identification of the Lie algebra elements with invariant vector fields. In our case, writing $C_R$, $C_L$, $C_{Ax}$ and $C_{Ad}$ for the Casimirs of the chiral left, chiral right, axial and adjoint ${\rm sl}(2,\mathbbm{R})$ subalgebras, we have $C_L + C_R = C_{Ax} + C_{Ad}$, so the Laplacian is identified with $-C_L - C_R 	+ C_{Ad} = -C_{Ax}$.

The scalar field $\phi$ satisfies the Klein-Gordon equation
\be
\label{EqEqMotScal}
(\Delta + m^2)\phi = 0 \;.
\ee
Any differential operator $D$ acting on $\phi$ and commuting with the Laplacian is an infinitesimal symmetry of the equations of motion. We are interested here only in symmetries in an infinitesimal neighborhood of a point $p$. Therefore, we see $\phi$ as defining an element in the infinite jet $\mathcal{J}^\infty_p$ at $p$, given by the collection of all its partial derivatives. (For an introduction to jets, see for instance Appendix D of \cite{Bekaert:2008sa} or \cite{2009arXiv0908.1886S}.) A linear differential equation defines a linear subspace of $\mathcal{J}^\infty_p$, encoding the linear relations between the partial derivatives of the solutions. In particular, there is a subspace $EM_{m,p} \subset \mathcal{J}^\infty_p$ corresponding to the equations of motion \eqref{EqEqMotScal}. This is in essence the geometric interpretation of the unfolded formalism of Vasiliev, used to write down the equations of motion of higher spin theories. (Compare for instance with Section 2 of \cite{Vasiliev:1996hn}.)

A differential operator corresponds to a linear map of $\mathcal{J}^\infty_p$ to itself, which is determined by the action of the differential operator on functions at $p$. We see therefore that the ``infinitesimal symmetries of the equations of motion at $p$'' can be pictured as those endomorphisms of $\mathcal{J}^\infty_p$ coming from differential operators and preserving $EM_{m,p}$.

The Killing vector fields associated to a fixed basis of ${\rm sl}(2,\mathbbm{R}) \oplus {\rm sl}(2,\mathbbm{R})$ commute with $\Delta$ and are symmetries of the equations of motion. They generate an associative subalgebra of the algebra of differential operators isomorphic to $U({\rm sl}(2,\mathbbm{R}) \oplus {\rm sl}(2,\mathbbm{R}))$. Acting on functions at $p$, we get a (reducible) representation of $U({\rm sl}(2,\mathbbm{R}) \oplus {\rm sl}(2,\mathbbm{R}))$ on $\mathcal{J}^\infty_p$. As was mentioned above, the Laplacian coincides with $-C_{Ax}$, and $EM_{m,p}$ is the subspace on which $C_{Ax} = m^2\mathbbm{1} = (\lambda^2-1)\mathbbm{1}$. As the Casimir is central, $EM_{m,p}$ provides a subrepresentation of $U({\rm sl}(2,\mathbbm{R}) \oplus {\rm sl}(2,\mathbbm{R}))$. Moreover, we are considering a scalar field, for which $C_{Ad} = 0$. We therefore learn that the Casimirs of each of the chiral ${\rm sl}(2,\mathbbm{R})$ take the value $\frac{1}{2}(\lambda^2-1)$ on $EM_{m,p}$. As $EM_{m,p}$ obviously integrates to a representation of $SL(2,\mathbbm{R}) \times SL(2,\mathbbm{R})$, it has to be a direct sum of representations in the complementary series with parameter $\lambda$. In addition, we see that given the values of the Casimirs, the representation of $U({\rm sl}(2,\mathbbm{R}) \oplus {\rm sl}(2,\mathbbm{R}))$ on $EM_{m,p}$ factors through a representation of $\hs(\lambda) \oplus \hs(\lambda)$, explaining the appearance of the higher spin algebra.

It would be worth exploring these ideas further. But for now, we take this argument as evidence that the complementary series of representation of $SL(2,\mathbbm{R})$ plays a central role in the physics of the higher spin field theory, and that it is the correct setup to look for a way of exponentiating $\hs(\lambda)$.

\subsection{Faithfulness of the complementary series representations}

\label{SecFaithCompSerRep}

The aim of this section is to show that the representation $\rho^\lambda$ of $\hs(\lambda)$ is faithful, i.e. that the kernel of $\rho^{\lambda}$ vanishes. This point is essential, as our aim is to define the higher spin symmetry group by exponentiating $\rho^{\lambda}$.

$\rho^{\lambda}$ in \eqref{EqActInfsl2PrinSer} defines a homomorphism $\phi_\lambda$ of $\mathfrak{sl}(2,\R)$ and $\hs(\lambda)$ into the Heisenberg-Weyl algebra $A_1$. This homomorphism is not equivalent to the one encountered in Section \ref{SecNonExpElAdj} \cite{2005math......4224R}. There is a natural grading on $A_1$ assigning degree 1 to $x$ and degree $-1$ to $\partial$. If we endow $\mathfrak{sl}(2,\R)$ with the grading assigning degree $1$ to $X_+$, degree $0$ to $X_0$ and degree $-1$ to $X_-$, $\phi_\lambda$ preserves the gradings. 

\begin{lemma}
$\phi_\lambda$ is an injective homomorphism of $B(\lambda)$ into $A_1$. 
\end{lemma}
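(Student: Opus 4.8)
The homomorphism part is already in hand: the Lie algebra map $\phi_\lambda$ of \eqref{EqActInfsl2PrinSer} extends to $U(\mathfrak{sl}(2,\C))$, and since $\rho^\lambda(\Omega) = \frac{1}{2}(\lambda^2-1)$ it descends to the quotient $B(\lambda)$. So the content of the lemma is injectivity, and the plan is to exploit the grading that the text has just introduced. Since $\phi_\lambda$ is a homomorphism of graded algebras, its kernel is a graded ideal, and it suffices to prove injectivity on each homogeneous component $B(\lambda)_k$ of fixed weight $k$. The representation space is $\C[x]$ with monomial basis $\{x^n\}_{n\geq 0}$, on which the generators act triangularly, and this is what makes the computation transparent.

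First I would establish a normal form for homogeneous elements. Using $X_+X_- = X_-X_+ + X_0$ together with the defining relation $\Omega = 2X_-X_+ + X_0 + \frac{1}{2}X_0^2 = \frac{1}{2}(\lambda^2-1)$, the product $X_-X_+$ equals a fixed quadratic $p(X_0) := \frac{1}{4}(\lambda^2-1) - \frac{1}{2}X_0 - \frac{1}{4}X_0^2$. I claim every element of $B(\lambda)$ of weight $k \geq 0$ can be rewritten as $X_+^k f(X_0)$ with $f \in \C[t]$, and every element of weight $-k$ as $X_-^k g(X_0)$. Concretely, starting from a PBW monomial $X_-^a X_0^b X_+^c$ with $c-a=k$, one pushes the powers of $X_0$ to the right via $X_0 X_+ = X_+(X_0+2)$, and then repeatedly replaces the innermost $X_-X_+$ by $p(X_0)$, each step lowering $\min(a,c)$ at the cost of multiplying by a polynomial in $X_0$; the process terminates in the claimed form. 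This reduction, which is exactly where the quotient relation $\Omega = \frac{1}{2}(\lambda^2-1)$ enters, is the step I expect to be the main obstacle, both in setting it up cleanly and in checking that it genuinely exhausts each weight space.

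Granting the normal form, the conclusion is immediate. On monomials one has $\phi_\lambda(X_0)\,x^n = (2n+\lambda+1)\,x^n$, $\phi_\lambda(X_+)\,x^n = (n+\lambda+1)\,x^{n+1}$, and $\phi_\lambda(X_-)\,x^n = -n\,x^{n-1}$. Hence for $u = X_+^k f(X_0)$ with $k \geq 0$, $\phi_\lambda(u)\,x^n = f(2n+\lambda+1)\Bigl(\prod_{j=0}^{k-1}(n+j+\lambda+1)\Bigr) x^{n+k}$. Since $0 < \lambda < 1$, the product never vanishes for $n \geq 0$, so $\phi_\lambda(u)=0$ forces $f(2n+\lambda+1)=0$ for all $n$; a polynomial with infinitely many roots is zero, so $f=0$ and thus $u=0$. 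The weight $-k$ case runs identically using the lowering coefficients $-n$ (nonzero for $n\geq 1$), and weight $0$ is the case $k=0$. Note that injectivity is obtained directly from the vanishing of the image, so uniqueness of the normal form is not even needed.

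As a sanity check and shortcut, one may also note that $\phi_\lambda$ is nonzero (e.g. $\phi_\lambda(X_-) = -\partial \neq 0$) while $B(\lambda)$ is simple for non-integer $\lambda$, in particular for $0 < \lambda < 1$; its kernel, being a proper two-sided ideal, must then vanish. I would nonetheless favour the explicit argument above, since it is self-contained and exhibits the triangular action on $\C[x]$ that is reused in the later analysis of the complementary series.
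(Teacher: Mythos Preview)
Your proof is correct, but it takes a different route from the paper. Both arguments begin by reducing to homogeneous weight components via the grading. The paper then stays in $A_1$: it computes that $\phi_\lambda(V^s_0)=c'\,x^{s-1}\partial^{s-1}+(\text{terms of lower order in }\partial)$ with $c'\neq 0$, and concludes linear independence of $\{\phi_\lambda(V^s_0)\}_s$ directly from the filtration of $A_1$ by differential order, with no normal form in $B(\lambda)$ and no test vectors. Your route instead works inside $B(\lambda)$ first, establishing the normal form $X_+^k f(X_0)$ from the relation $X_-X_+=p(X_0)$, and then passes to the faithful $A_1$-module $\C[x]$, reading off injectivity from the infinitely many distinct $X_0$-eigenvalues $2n+\lambda+1$. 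The paper's leading-term computation is quicker and manifestly independent of $\lambda$; your argument is more explicit and, as you say, dovetails with the monomial action used later. One small imprecision: for weight $-k$ you need $n\geq k$ (not $n\geq 1$) so that all factors $n,n-1,\ldots,n-k+1$ are nonzero, but this still leaves infinitely many test values and the conclusion is unaffected. Your simplicity shortcut is also valid in the range $0<\lambda<1$.
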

\begin{proof}
As $\phi_\lambda$ preserves the grading described above, we only have to check that no nontrivial linear combination of $V^s_0$ lies in the kernel of $\phi_\lambda$. Let us write all the elements of $A_1$ as sums of monomials of the form $x^n \partial^m$, $n,m \in \N$. We can then write
\be
\phi_\lambda(X_+^{s-1}) = c x^{2s-2} \partial^{s-1} + ...
\ee
for $c \in \R$, $c \neq 0$ and the dots denote a sum of monomials involving powers of $\partial$ smaller than ${s-1}$. We get from \eqref{EqDefVSBasisHSL}
\be
\phi_\lambda(V^s_0) = c' x^{s-1} \partial^{s-1} + ...
\ee
and $\phi_\lambda(V^s_0)$ is linearly independent from the set $\{\phi_\lambda(V^{s'}_0)\}_{s' < s}$.
\end{proof}
As the representations of $A_1$ on $\mathcal{H}_\lambda$ are faithful, we deduce that the complementary series representations are faithful representations of $\hs(\lambda)$.

For future convenience, we list the images of the generators of $\hs(\lambda)$ in the Heisenberg-Weyl algebra, obtained from \eqref{EqActInfsl2PrinSer}:
$$
\rho^\lambda(V^2_1) = - \partial \;, \quad
\rho^\lambda(V^2_0) = \frac{\lambda+1}{2} + x \partial \;, \quad
\rho^\lambda(V^2_{-1}) = - (\lambda+1) x - x^2 \partial \;,
$$ 
$$
\rho^\lambda(iV^3_2) = i\partial^2 \;,  \quad
\rho^\lambda(iV^3_1) = -i\frac{\lambda+2}{2} \partial - i x \partial^2 \;,
$$
\be
\label{EqActInfhslambda}
\rho^\lambda(iV^3_0) =  i\frac{(\lambda+1)(\lambda+2)}{6} + i(\lambda+2) x \partial + i x^2 \partial^2 \;,
\ee
$$
\rho^\lambda(iV^3_{-1}) = -i \frac{(\lambda+1)(\lambda+2)}{2}x - 3i \frac{(\lambda + 2)}{2} x^2 \partial - i x^3 \partial^2 \;,
$$
$$
\rho^\lambda(iV^3_{-2}) = i(\lambda+1)(\lambda+2)x^2 + 2i(\lambda+2)x^3 \partial + ix^4 \partial^2  \;. \notag 
$$

\subsection{The circle model for the complementary series representations}

\label{SecCircModCompSerRep}

We will refer to the model for representations in the complementary series in terms of functions on the real axis as the \textit{line model}. There is another model for representations in the complementary series, in terms of the space of functions on the circle (see for instance the original work \cite{Bargmann1947}, Section 8), which we will refer to as the \textit{circle model}. It will prove useful when performing explicit exponentiations in Section \ref{SecExFinHSTrans}.

To derive the circle model, we perform a stereographic mapping of the real axis onto the circle of unit modulus complex numbers,
\be
x = i\frac{z - 1}{z + 1} \;, \quad z = -\frac{x + i}{x - i}
\ee
and parameterize the unit circle with a periodic variable $z = \exp i \theta$. Altogether, this amounts to a change of variable
\be
x = - \tan (\theta/2) \;.
\ee
We have to compute the image of the hermitian form \eqref{EqSesqFormComplSer} under this change of variable. Write $y = - \tan (\eta/2)$. Using the standard doubling and sum/difference formulas for trigonometric functions, we have
\be
|x-y| = |\tan(\eta/2) - \tan(\theta/2)| = \frac{|1-\cos(\theta - \eta)|^{1/2}}{\sqrt{2}\cos(\theta/2)\cos(\eta/2)} \;,
\ee
so the measure factor becomes
\be
|x-y|^{\lambda-1} dx dy = \frac{1}{\sqrt{2}}|1-\cos(\theta - \eta)|^{\frac{\lambda-1}{2}} \cos(\theta/2)^{-1-\lambda} \cos(\eta/2)^{-1-\lambda} d\theta d\eta \;.
\ee
We map a function $f(x)$ on the real axis to a function $F(\theta)$ on $S^1$ by
\be
\label{EqMapFuncModCompRep}
F(\theta) = (\cos(\theta/2))^{-1-\lambda} f(x(\theta)) \;,
\ee
so that
\be
\label{EqSesqFormCirclMod}
(f,g)_c = (F,G)_c = \frac{1}{\sqrt{2}} \int_{S^1 \times S^1}F(\theta) \bar{G}(\eta) |1-\cos(\theta - \eta)|^{\frac{\lambda-1}{2}} d\theta d\eta \;.
\ee
We recover the Hilbert space $\mathcal{H}_\lambda$ as the space of square integrable functions on the circle with respect to the hermitian form above. The $SL(2,\mathbbm{R})$ action turns into the natural action via Moebius transformations of $SU(1,1)$ on the unit circle in the complex plane.

$\partial = \frac{d}{dx}$ is mapped under \eqref{EqMapFuncModCompRep} to 
\be
-(1 + \cos \theta) \partial_\theta + s \sin \theta \;,
\ee
where $\partial_\theta = \frac{d}{d\theta}$ and $s = (\lambda+1)/2$. The ${\rm sl}(2,\mathbbm{R})$ generators read
\be
\sigma^\lambda(X_+) = -s\sin \theta - (1 - \cos \theta) \partial_\theta \;,
\ee
\be
\sigma^\lambda(X_0) = 2s \cos \theta + 2 \sin \theta \partial_\theta \;,
\ee
\be
\sigma^\lambda(X_-) =  -s\sin \theta + (1 + \cos \theta) \partial_\theta \;.
\ee
Remark the simple form of the compact generator of $SL(2,\mathbbm{R})$, corresponding to infinitesimal rotations of the unit circle:
\be
\sigma^\lambda\left(\frac{1}{2}(X_+ - X_-)\right) = -\partial_\theta \;.
\ee
It is easy to work out the expression of any generator of $\hs(\lambda)$ in the circle model. Of course, just like in the line model, all the elements of $\hs(\lambda)$ are represented by differential operators that are skew-symmetric with respect to the hermitian form \eqref{EqSesqFormCirclMod}.

\subsection{Exponentiable elements}

\label{SecExpElComp}

An essentially skew-adjoint operator admits a unique skew-adjoint extension, and the latter generates a one-parameter subgroup of $U(\mathcal{H}_\lambda)$, the group of unitary transformations of $\mathcal{H}_\lambda$ in the strong operator topology (see Appendix \ref{SecFuncCalcExp}). Therefore, any element of $\hs(\lambda)$ represented by an essentially skew-adjoint operator on $\mathcal{H}_\lambda$ exponentiates. Certain non-essentially skew-adjoint operators can be exponentiated as well, but a choice of extension has to be made, and we do not know a way of picking a particular extension. We discuss this point further in Appendix \ref{SecExtNonEssSAOp}, but in the rest of the paper, we will consider an element of $\hs(\lambda)$ to be exponentiable if and only if it is represented by an essentially skew-adjoint operator.

In Appendix \ref{SecCritSAEnvAlg}, we review sufficient criterions for an element of $\hs(\lambda)$ to be represented on $\mathcal{H}_\lambda$ by an essentially skew-adjoint operator and to exponentiate. They can be summarized as follows.
\begin{enumerate}
	\item All the elements of ${\rm sl}(2,\mathbbm{R}) \subset \hs(\lambda)$ exponentiate. 
	\item Any element of $\hs(\lambda)$ expressible as a complex polynomial of a single generator of ${\rm sl}(2,\mathbbm{R})$ exponentiates.
	\item Any elliptic element of $\hs(\lambda)$ exponentiates. (An elliptic element of an enveloping algebra is an element such that the corresponding differential operator on the Lie group is elliptic, see Appendix \ref{SecCritSAEnvAlg}.)
	\item Any element commuting with an elliptic element of $\hs(\lambda)$ exponentiates.
\end{enumerate}
The first criterion is obvious, as we started from a representation of $SL(2,\mathbbm{R})$. From the second criterion, we learn in particular that $i^sV^{s}_{s-1}$ and $i^sV^{s}_{-s+1}$ exponentiate. This implies that all the elements of $\hs(\lambda)$ which did exponentiate in the adjoint representation exponentiate in the complementary series representation (see Theorem \ref{ThGrInnAutHSL}). 

The third criterion tells us that the subset of exponentiable elements is dense in $\hs(\lambda)$. Indeed, given $X \in \hs(\lambda)$ a finite sum of generators with maximal spin $s$, it is always possible to find an elliptic element $Y$ in $\hs^{(s')}(\lambda)$ for some $s' > s$. Then $X + \epsilon Y$ is elliptic for all $\epsilon > 0$. So while not all of the elements of $\hs(\lambda)$ exponentiate, as will be shown in Section \ref{SecV31IsNotEssSA}, there are always exponentiable elements in an arbitrarily small neighborhood of any element.

Interestingly, the criterion of being elliptic is completely independent of the choice of unitary representation of $SL(2,\mathbbm{R})$. This suggests that there might exist a definition of the higher spin symmetry group which does not require the use of any representation, unlike the one we will give in Section \ref{SecPropGrHSSym}.

Even more interestingly, from the picture we developed in Section \ref{SecPhysMot}, an element of $\hs(\lambda)$ is elliptic if and only if the corresponding differential operator on spacetime, representing an infinitesimal symmetry of the Klein-Gordon equation at $p \in AdS_3$, is elliptic. This suggests that the reason why certain elements of $\hs(\lambda)$ exponentiate and others do not can be understood from a spacetime point of view.

\subsection{A non-exponentiable element}

\label{SecV31IsNotEssSA}

We prove here that $iV^3_1$ does not exponentiate in the complementary series representation. To this end, we show that while $i\rho^\lambda(V^3_1)$ is a skew-symmetric differential operator with respect to \eqref{EqSesqFormComplSer}, it is not skew-adjoint. This section is a bit technical and skipping it will not impair the understanding of the rest of the paper.

We have
\be
T = \rho^\lambda(iV^3_1) = -ix \partial^2 - i\frac{\lambda+2}{2} \partial
\ee
and the domain of $T$ is the space of smooth functions on $\mathbbm{R}$ whose norm \eqref{EqSesqFormComplSer} is finite. Recall the criterion for skew-adjointness presented in Appendix \ref{SecCritSAEnvAlg}: a skew-symmetric operator $T$ is essentially skew-adjoint if and only if there is no $f \in \mathcal{D}(T^\ast)$ such that $T^\ast f = f$ or $T^\ast f = -f$. The possible lack of essential skew-adjointness of $T$ is therefore equivalent to the existence of weak solutions of the differential equation
\be
\label{EqDiffEqV31}
(\epsilon - T)f = ix \partial^2 f + i\frac{\lambda+2}{2} \partial f + \epsilon f = 0 \;, \quad \epsilon = \pm 1
\ee
in $\mathcal{H}_\lambda$, satisfying
\be
\label{EqDiffEqV31Weak}
\left(f, (\epsilon + T) g \right)_c = 0 \;,
\ee
for all smooth test functions $g$. 

We start by studying formal solutions of \eqref{EqDiffEqV31}. Substituting $f(x) = |x|^{-\lambda/4} g(x)$ and performing the change of variable $y = (1-\epsilon i)\sqrt{2|x|}$ turns \eqref{EqDiffEqV31} into the Bessel differential equation for the variable $y$. Writing ${\rm sgn}(x)$ for the sign of $x$ and $J_\alpha$ for the Bessel function of the first kind, the two linearly independent solutions on the intervals $(-\infty, 0)$ and $(0,\infty)$ read
\be
\label{EqElSolDiffEqV31}
f_1(x) = |x|^{-\lambda/4}J_{-\lambda/2}\big((1+i)\sqrt{2|x|}\big) \;, \quad f_2(x) = |x|^{-\lambda/4}J_{\lambda/2}\big((1+i)\sqrt{2|x|}\big)
\ee
for $\epsilon  = - {\rm sgn}(x)$ and 
\be
\label{EqElSolDiffEqV31b}
f_1(x) = |x|^{-\lambda/4}J_{-\lambda/2}\big((1-i)\sqrt{2|x|}\big) \;, \quad f_2(x) = |x|^{-\lambda/4}J_{\lambda/2}\big((1-i)\sqrt{2|x|}\big)
\ee
for $\epsilon  = {\rm sgn}(x)$. 

We now have to figure out if there are linear combinations of these formal solutions that belong to $\mathcal{H}_\lambda$ and satisfy equation \eqref{EqDiffEqV31Weak}. To this end, it is useful to define the Bessel functions of the second kind and the Hankel functions:
\be
Y_\nu(z) = \frac{J_\nu(z)\cos \pi \nu -  J_{-\nu}(z)}{\sin \pi \nu} \;, \quad H^{+}_\nu(z) = J_\nu(z) + i Y_\nu(z) \;, \quad H^{-}_\nu(z) = J_\nu(z) - i Y_\nu(z) \;
\ee
for $\nu > 0$. The Hankel functions follow the asymptotics 
\be
H_\nu^{+}(z) \sim \sqrt{\frac{2}{\pi z}}\exp\left(i\left(z-\frac{\nu\pi}{2}-\frac{\pi}{4}\right)\right) \mbox{ for } |\arg z| < \pi
\ee
\be
H_\nu^{-}(z) \sim \sqrt{\frac{2}{\pi z}}\exp\left(-i\left(z-\frac{\nu\pi}{2}-\frac{\pi}{4}\right)\right) \mbox{ for } |\arg z| < \pi
\ee
for large $|z|$ and $H_\nu^{\pm}(z) \sim z^{-\nu}$ for small $|z|$. We see that for $\epsilon = 1$, $H^-_{\lambda/2}((1-i)\sqrt{2|x|})$ is a square summable solution on $(0,\infty)$ and $H^+_{\lambda/2}((1+i)\sqrt{2|x|})$ is a square summable solution on $(-\infty,0)$. We can extend them to all of $\mathbbm{R}$ by zero and write $f^>_1$ and $f^<_1$ for the resulting functions. Similarly, for $\epsilon = -1$, we have a square summable solution $H^+_{\lambda/2}((1+i)\sqrt{2|x|})$ on $(0,\infty)$ and $H^-_{\lambda/2}((1-i)\sqrt{2|x|})$ on $(-\infty,0)$. We extend them as well to $\mathbbm{R}$ by zero and write $f^>_{-1}$ and $f^<_{-1}$ for the resulting functions. $f^>_\epsilon$ and $f^<_\epsilon$ are linearly independent solutions in $\mathcal{H}_\lambda$ of \eqref{EqDiffEqV31} outside $x = 0$. We need now to understand which of their linear combinations satisfy \eqref{EqDiffEqV31Weak}. 

Using the explicit expression \eqref{EqSesqFormComplSer} for the norm on $\mathcal{H}_\lambda$, we find that \eqref{EqDiffEqV31Weak} can be written
\be
\label{EqRewEqWeakSolV31}
\int_{-\infty}^{\infty} dx f(x) \left(\epsilon + i \partial^2 x - i \frac{\lambda+2}{2}\partial \right) \tilde{g}(x) \;,
\ee
where $\tilde{g}(x)$ is the smooth function defined by
\be
\tilde{g}(x) = \int_{-\infty}^{\infty} dy \bar{g}(y) |x-y|^{\lambda-1} \;.
\ee
We fix $\epsilon$ and take $f$ to be a linear combination of $f^<_\epsilon$ and $f^>_\epsilon$. We take a small $a > 0$ and we decompose the integral in \eqref{EqRewEqWeakSolV31} into integrals over $(-\infty, -a]$, $(-a,a)$ and $[a, \infty)$. As $f$ is square summable and $\tilde{g}$ is smooth, the integral over $(-a,a)$ goes to zero as $a \rightarrow 0$. Integrating by part on the two remaining domains, we find integrands proportional to $(\epsilon - T)f = 0$, so only the boundary terms at $-a$ and $a$ might prevent \eqref{EqRewEqWeakSolV31} to vanish. They can be computed explicitly and read respectively
\be
-i\lambda a^{-\lambda/2} \tilde{g}(-a) - i a^{-\lambda/2 + 1} \partial \tilde{g}(-a) \mbox{ for } f = f^<_\epsilon
\ee
and
\be
i\lambda a^{-\lambda/2} \tilde{g}(a) - i a^{-\lambda/2 + 1} \partial \tilde{g}(a) \mbox{ for } f = f^>_\epsilon \;.
\ee
The second terms clearly tend to zero as $a \rightarrow 0$, which the first terms will cancel only for $f = f^>_\epsilon + f^<_\epsilon$ up to scalar multiples. We therefore find one weak solution for each $\epsilon = \pm 1$. The deficiency indices of $\rho(iV^3_1)$ are $(1,1)$, $\rho(iV^3_1)$ is not a skew-adjoint operator and $V^3_1$ does not exponentiate in the complementary series representation.

\subsection{The higher spin symmetry group}

\label{SecPropGrHSSym}

We can now define the higher spin symmetry group $HS(\lambda)$:\vspace{.2cm}

\noindent\textbf{Definition}: $HS(\lambda)$ is the subgroup of $U(\mathcal{H}_\lambda)$ spanned by the one-parameter subgroups generated by the exponentiable elements of $\hs(\lambda)$.\vspace{.2cm}

Our discussion in Section \ref{SecPhysMot} lets us hope for a more conceptual definition. Recall that we characterized $\hs(\lambda)$ as the Lie algebra of differential operators at a point $p$ in $AdS_2$ preserving the subspace $E_{m,p}$ of the infinite jet $\mathcal{J}^\infty_p$ corresponding to the Klein-Gordon differential equation at $p$. Such differential operators describe infinitesimal symmetries of the Klein-Gordon equation. This suggests that $HS(\lambda)$ can be defined as a group of transformations of $\mathcal{J}^\infty_p$ preserving $E_{m,p}$. If this definition can be made precise, it will be a useful handle on the higher spin symmetry group. We hope to come back to this question in the future, and will adhere to the practical definition above for the rest of the paper.

We now make a few remarks about $HS(\lambda)$. Again, each of these questions would deserve detailed studies, but we leave this for future work. 

\paragraph{Topology} Recall that a \textit{topological group} is a group endowed with a topology compatible with the group structure, i.e. a topology in which the multiplication and the inverse map are continuous. The group $U(\mathcal{H}_\lambda)$ of unitary operators on $\mathcal{H}_\lambda$ can be given several sensible topologies, but we are using here the \textit{strong operator topology} (see Section VI.1 of \cite{reed1980methods}). 
This topology is natural because it makes the representation map $SL(2,\mathbbm{R}) \rightarrow U(\mathcal{H}_\lambda)$ continuous. (This is not the case with the norm topology on $U(\mathcal{H}_\lambda)$, for instance.) The strong operator topology on $U(\mathcal{H}_\lambda)$ endows $HS(\lambda)$ with the structure of a topological group.

\paragraph{Lack of Lie group structure} A \textit{Lie group} is a topological group endowed with a smooth manifold structure, and for which the multiplication and inverse are smooth maps. Unfortunately, $HS(\lambda)$ is not a Lie group in the strong operator topology. This can be seen by adapting an argument due to Neeb and appearing in \cite{2013arXiv1309.5891S}. Assume that there is an countable orthonormal basis of $\mathcal{H}_\lambda$, such that the commutative group $K = U(1)^\mathbbm{N}$ of unitary operators diagonal in this basis is a subgroup of $HS(\lambda)$. We will see in Section \ref{SecExFinHSTrans} that $HS(\lambda)$ admits such a subgroup. The induced topology on $K$ is the product topology. Open sets in this topology are of the form $\pi_{i \in \mathbbm{N}} U_i$, where $U_i$ are open sets of $U(1)$ which are different from $U(1)$ only for finitely many $i$'s. A basic property of the product topology is that products of compact sets are compact, which implies that $K$ is compact in this topology, hence it is also locally compact. If $K$ was a Lie group, it would be an infinite-dimensional manifold and have a local model in terms of an infinite-dimensional Hausdorff topological vector space, which can never be locally compact. So no such local model exists for $K$, and neither do they for $HS(\lambda)$.

As far as we are aware, there might be a better behaved topology on $HS(\lambda)$ that would turn it into a Lie group. While no counterexample is available, it is not known in the infinite-dimensional case if a Lie group necessarily comes with an exponential map (see Section 2 of \cite{2006math......2078G}).

We need $HS(\lambda)$ to be a Lie group if we want picture the higher spin gauge field in the standard way, as a connection on a principal bundle. While there is no problem defining principal $G$-bundles for any group $G$, we need the principal bundle to be a smooth manifold in order to speak about connections/gauge fields. This requires $G$ to be itself a smooth manifold, i.e. a Lie group.

\paragraph{Adjoint action} A related problem concerns the adjoint action of $HS(\lambda)$ on $\hs(\lambda)$. Let $g \in HS(\lambda)$. $g$ is a unitary operator on $\mathcal{H}_\lambda$. Then, for $y \in \hs(\lambda)$, $g \rho^\lambda(y) g^{-1}$ is a well-defined unbounded operator on $\mathcal{H}_\lambda$. However, it will in general not be a finite order differential operator and will be expressible in terms of the operators $\rho^\lambda(V^s_n)$ only as a formal series. This suggests that the image of the adjoint action of $HS(\lambda)$ on $\rho^\lambda(\hs(\lambda))$ can be seen as a completion $\overline{\hs}(\lambda)$ of $\hs(\lambda)$. We do not know how to characterize $\overline{\hs}(\lambda)$ independently. Note also that as the elements of $\overline{\hs}(\lambda)$ are unbounded operators, it is not guaranteed that their domains overlap, so a priori nothing ensures that $\overline{\hs}(\lambda)$ carries a Lie bracket. The theorems available only show that there is a common dense domain of definition for elements in $\rho^\lambda(\hs(\lambda))$ and that it is preserved by the action of $SL(2,\mathbbm{R})$ (see Appendix \ref{SecApRepEnvAlgUnbOp} and Section 10.1 of \cite{schmudgen1990unbounded}). What needs to be shown is that the common domain of definition is preserved as well by the action of $HS(\lambda)$, which is not obvious.

\subsection{The Euclidean higher spin symmetry group}

\label{SecEuclSymGrp}

Solutions of the higher spin theory are sometimes more conveniently constructed in the Euclidean version of the theory. This is typically the case for black holes. It is therefore also interesting to define the higher spin symmetry group of the Euclidean higher spin theory. Another reason to be interested in the Euclidean group is that it is also the gauge group for the higher spin theory formulated on three-dimensional de Sitter space. 

Recall the situation in pure gravity. On Lorentzian $AdS_3$, pure gravity can be formulated as a Chern-Simons theory with gauge group $SO^+(2,2) \simeq (SL(2,\mathbbm{R}) \times SL(2,\mathbbm{R}))/\mathbb{Z}_2$, while on Euclidean $AdS_3$, the gauge group is $SO^+(3,1) \simeq PSL(2,\mathbbm{C})$. As a real Lie algebra, $\mathfrak{sl}(2,\C)$ has generators $X_+$, $X_-$, $X_0$ and
\be
Y_+ = \left( \begin{array}{cc} 0 & i \\ 0 & 0 \end{array} \right) \;, \quad Y_- = \left( \begin{array}{cc} 0 & 0 \\ i & 0 \end{array} \right) \;, \quad Y_0 = \left( \begin{array}{cc} i & 0 \\ 0 & -i \end{array} \right) \;.
\ee
It has two quadratic Casimirs, given by
\be
\Omega^E_1 = \frac{1}{4}\left(\frac{1}{2}X_0^2 + X_+ X_- + X_- X_+ - \frac{1}{2} Y_0^2 - Y_+ Y_- - Y_- Y_+\right) \;,
\ee
\be
\Omega^E_2 = \frac{1}{4} \left( X_0 Y_0 + X_+ Y_- + Y_- X_+ + Y_+ X_- + X_- Y_+ \right)
\ee
We emphasize that we always consider $\mathfrak{sl}(2,\C)$ as a real Lie algebra, so we can in particular consider its complexification $(\mathfrak{sl}(2,\C))|_\mathbb{C}$. In $(\mathfrak{sl}(2,\C))|_\mathbb{C}$ we have two commuting $\mathfrak{sl}(2,\R)$ subalgebras generated by the elements
\be
\label{EqGenCompSl2R}
\frac{1}{2}(X_j \pm i Y_j)
\ee
They are the Wick rotated images of the two $\mathfrak{sl}(2,\R)$ algebras generating the Lorentzian $AdS_3$ Poincaré group. It is easy to check that the quadratic Casimirs of these $\mathfrak{sl}(2,\R)$ algebras are given by $\Omega^E_1 \pm i \Omega^E_2$. Consequently, we expect that in the Euclidean continuation of the $\hs(\lambda)$ theory,
\be
\label{EqRelCasEuclHSAlg}
\Omega^E_1 = \frac{1}{2}(\lambda^2-1) \;, \quad \Omega^E_2 = 0 \;.
\ee
The Euclidean higher spin algebra $\hs_E(\lambda)$ is therefore constructed as follows. We start from the real universal enveloping algebra $U(\mathfrak{sl}(2,\C))$, where it is understood that $\mathfrak{sl}(2,\C)$ is seen as a real Lie algebra. We quotient it by the ideal associated to the relations \eqref{EqRelCasEuclHSAlg}, see it as a Lie algebra by considering the commutator as the bracket, and factor out the direct summand associated to the identity element as in \eqref{EqFactDirSumId}. In order to study exponentiation in $\hs_E(\lambda)$, we need to find if $PSL(2,\mathbbm{C})$ admits unitary representations in which the Casimirs take the values \eqref{EqRelCasEuclHSAlg}.

This turns out to be the case. The complementary series representations of $SL(2,\mathbb{C})$ are unitary representations, defined as follows \cite{knapp2001representation}. The underlying Hilbert space $\mathcal{H}^E_\mu$ is the space of complex valued functions over $\mathbb{C}$ that are $L^2$-summable with respect to the scalar product 
\be
\label{EqSesqFormComplSerCpx}
(f,g)_c^{E} = \int_{\C^2} \frac{f(z_1) \bar{g}(z_2)}{|z_1-z_2|^{3-\mu}} dz_1 dz_2 \;,
\ee
for $1 < \mu < 3$. The representation map is given by
\be
\rho^{\mu}_E(\gamma)f(z) = |-bz + d|^{-1-\mu} \, f \left(\frac{az-c}{-bz+d}\right) \;. 
\ee 
where we used the same conventions as in \eqref{EqActNUPRSSL2R}, with $\gamma \in SL(2,\C)$. It is easy to check that the central element $-\bm{1}$ is represented trivially, hence the representation above is also a representation of $PSL(2,\C)$. This representation is \emph{not} holomorphic, as a holomorphic representation could not be unitary. This fact was at the origin of a confusion in a previous version of this paper, where we claimed that the techniques above would not apply to $\hs_E(\lambda)$. In fact, as the complementary series representations are unitary, elements of $\hs_E(\lambda)$ are represented by skew-symmetric operators, and our techniques to study exponentiation carry over to this case. 

Restricting to smooth functions, we can compute explicitly the images of the real generators of $\mathfrak{sl}(2,\C)$ through the infinitesimal representation map:
\begin{align}
\label{EqInfRepEuclHSAlg}
\rho^\mu_E(X_-) = - \partial_x \;, \; & \; \rho^\mu_E(Y_-) = - \partial_y \notag \\
\rho^\mu_E(X_0) = (\mu + 1) + 2(x\partial_x + y \partial_y) \;, \; & \; \rho^\mu_E(Y_0) = 2(x\partial_y - y \partial_x) \\
\rho^\mu_E(X_+) = (\mu+1)x + (x^2-y^2)\partial_x + 2xy\partial_y \;, \; & \; \rho^\mu_E(Y_+) = -(\mu + 1)y + (x^2-y^2)\partial_y - 2xy\partial_x \notag
\end{align}
where we wrote $z = x + iy$, $x,y \in \R$. Remark that the fact that the representation is not holomorphic translates into the fact that the infinitesimal representation map is not complex linear: for $X \in \mathfrak{sl}(2,\C)$,
\be
\rho^{\mu}_E(iX) \neq i \rho^{\mu}_E(X) \,.
\ee
From the expression above, it is easy to compute the partial differential operator associated to any element of $\hs_E(\lambda)$. One can check that the Casimirs are given by
\be
\Omega^E_1 = \frac{1}{8}(\mu^2 - 2\mu -3) \;, \quad \Omega^E_2 = 0 \;.
\ee
This implies that $\mu$ and $\lambda$ are related by $\mu = 2\lambda + 1$, which can also be checked by using \ref{EqInfRepEuclHSAlg} to represent the complex $\mathfrak{sl}(2,\mathbb{R})$ generators \eqref{EqGenCompSl2R}.

We see that the unitary range for the Lorentzian higher spin algebra, $0 < \lambda < 1$, coincides exactly with the unitary range for the Euclidean higher spin algebra $1 < \mu < 3$. We can therefore apply the same criterion for the exponentiability of elements of $\hs_E(\lambda)$: an element exponentiates if and only if it is represented by an essentially skew-adjoint operator on the Hilbert space $\mathcal{H}^E_\mu$. The sufficient criteria of Section \ref{SecExpElComp} remain valid. Remark however that we are dealing now with partial differential operators instead of ordinary differential operators. This will also be the case for higher spin algebras associated with theories in dimension higher than 3.

\section{Examples of finite higher spin gauge transformations}

\label{SecExFinHSTrans}

We investigate here certain maximal commutative subalgebras of $\hs(\lambda)$ whose generators can be diagonalized and exponentiated explicitly. These commutative subalgebras are polynomial algebras in a generator of $SL(2,\R)$, and we consider in turn generators associated to hyperbolic, parabolic and elliptic elements of $SL(2,\R)$. We deduce some information about the global topology of $HS(\lambda)$ and in the last case, we find a set of elements of the higher spin Lie algebra exponentiating to the identity of $HS(\lambda)$.

We note that global properties of the higher spin symmetry group were previously studied in \cite{Kraus:2012uf}, where a $\mathbb{Z}_4$ subgroup of the center was described. From our point of view based on exponentiation, we are only studying the adjoint form of $HS(\lambda)$, which has a trivial center. We will see in Section \eqref{SecExpCompGen} that the adjoint form has a non-trivial first homotopy group, and therefore admits covers with non-trivial centers. We will not try to determine which of these covers is the correct gauge group, although analogy with ordinary gravity would suggest a double cover.

\subsection{A subalgebra generated by a hyperbolic element}

The higher spin algebra admits a commutative subalgebra $\mathfrak{h}$ generated by $\{i^sV^s_0\}$. We showed in Section \ref{SecNonExpElAdj} that none of these generators exponentiate in the adjoint representation, apart from $V^2_0$. But any element of $\mathfrak{h}$ can be expressed as a polynomial in the hyperbolic generator $V^2_0 = \frac{1}{2}X_0$. (See for instance equation (3.4) in \cite{Pope:1989sr}.) Criterion 2 in Section \ref{SecExpElComp} then implies that every element in $\mathfrak{h}$ exponentiates in the complementary series representation.

In order to get some insight about the global structure of the subgroup of $HS(\lambda)$ generated by $\mathfrak{h}$, remark that the spectrum of $\rho^\lambda(V^2_0) =  \frac{\lambda+1}{2} + x \partial$ covers the whole imaginary axis. Indeed, the function
\be
f^0_\alpha(x) = x^{\left(i \alpha - \frac{\lambda+1}{2} \right)} \;, \quad \alpha \in \mathbbm{R}
\ee
is an eigenfunction with eigenvalue $i\alpha$. This shows that the subgroup $H \subset HS(\lambda)$ generated by $\mathfrak{h}$ is a countable infinite product $\mathbbm{R}^\mathbbm{N}$. (Compare with Section \ref{SecExpCompGen} below.) 

\subsection{Strictly nilpotent subalgebras}

We have a subalgebra $\mathfrak{n}_+$ of strictly nilpotent elements, given by linear combinations of the generators $\{i^sV^s_{s-1}\}$. $\mathfrak{n}_+$ is the subalgebra of skew-hermitian elements in the polynomial algebra of the parabolic generator $V^2_1 = X_-$. Again, Criterion 2 in Section \ref{SecExpElComp} implies that any element in $\mathfrak{n}_+$ exponentiates in the complementary series representation.

In the line model, the generators are represented by $\rho^\lambda(i^s V^s_{s-1}) = i^s \partial^s$. $\rho^\lambda(i^s V^s_{s-1})$ can easily be diagonalized by the functions
\be
f^+_\alpha(x) = \exp i \alpha x \;, \quad \alpha \in \mathbbm{R}.
\ee
The subgroup $N_+ = \exp \mathfrak{n}_+ \subset HS(\lambda)$ is again isomorphic to $\mathbbm{R}^\mathbbm{N}$. In the circle model, the eigenfunctions read
\be
F^+_\alpha(\theta) = (\cos(\theta/2))^{-1-\lambda} \exp \left(-i\alpha \tan (\theta/2) \right)
\ee
The similarity between the expressions for $\sigma^\lambda(X^+)$ and $\sigma^\lambda(X^-)$ allows us to guess the eigenfunctions in the circle model of the commutative subalgebra $\mathfrak{n}_-$ generated by $\{i^sV^s_{-s+1}\}$. They read
\be
F^-_\alpha(\theta) = (\sin(\theta/2))^{-1-\lambda} \exp \left(i\alpha \cot (\theta/2) \right) \;.
\ee
One can perform the change of variable described in Section \ref{SecCircModCompSerRep} to find the corresponding expression for the eigenfunctions in the line model.

\subsection{Compact subalgebras and the BTZ holonomies}

\label{SecExpCompGen}

We now come to an interesting commutative subalgebra of $\hs(\lambda)$, namely the subalgebra $\mathfrak{r}$ generated by the polynomials in the elliptic generator $X_R := \frac{1}{2}(X_+ - X_-) = -\frac{1}{2}(V^2_1 + V^2_{-1})$. Again, Criterion 2 of Section \eqref{SecExpElComp} implies that any element in $\mathfrak{r}$ exponentiates in the complementary series representation.

This subalgebra is best studied in the circle model of the complementary series of representations, as we have $\sigma^\lambda(X_R) = -\partial_\theta$. Elements of $\mathfrak{r}$ are straightforwardly diagonalized by the functions
\be
F^R_p = \exp ip \theta \;, \quad p \in \mathbbm{Z} \;.
\ee
We see here an interesting phenomenon. As the spectrum of $\sigma^\lambda(X_R)$ is discrete and integral, given any polynomial $P$ with integer coefficients, $2\pi P(X_R)$ exponentiates to the identity $\mathbbm{1} \in HS(\lambda)$. The subalgebra $\mathfrak{r}$ therefore generates a commutative subgroup $R \subset HS(\lambda)$ isomorphic to a countable direct product $U(1)^\mathbbm{N}$.

Elements of $\hs(\lambda)$ exponentiating to $\mathbbm{1}$ are important ingredients in the construction of spherically symmetric solutions of the associated higher spin theory, and especially black holes. Indeed, it is natural to take the base 3-manifold to be an infinite solid cylinder (in the case of AdS), or a solid torus (in the case of thermal AdS or black hole solutions). The radial dependence of the higher spin connection can be fixed by a choice of gauge \cite{oai:arXiv.org:1008.4744,oai:arXiv.org:1107.0290} and one may look for connections which are constant in the remaining directions. Of course, the holonomy of such a connection along the contractible circular direction has to be trivial. This implies that the connection integrated along this direction has to exponentiate to $\mathbbm{1}$.
\footnote{Note that it is often claimed in the literature that the connection only has to exponentiate to an element of the center. This is at first sight slightly confusing, because a connection cannot be smooth if its holonomy along a contractible loop is different from the identity. The confusion is solved by a careful consideration of the gauge groups involved. For instance, in the case of Lorentzian AdS solutions of ordinary 3d gravity, the gauge group is $SO(2,2) = SL(2,\mathbb{R}) \times SL(2,\mathbb{R})/\mathbb{Z}_2$. The components of the connection along both $SL(2,\mathbb{R})$ subfactors integrate to $-\mathbbm{1}$, the non-trivial element of the center of $SL(2,\mathbb{R})$. However, the full connection integrates to $\mathbbm{1}$, because of the quotient by $\mathbb{Z}_2$ in the definition of the gauge group. Similarly, in Euclidean signature, gauge group is $SO(3,1) = SL(2,\mathbb{C})/\mathbb{Z}_2$. The BTZ black hole connection exponentiates to $-\mathbbm{1}$ in $SL(2,\mathbb{C})$, which is mapped to $\mathbbm{1}$ in the quotient. The additional sign in the holonomy when considering the spin cover of the gauge group comes from the fact that the bounding spin structure on $S^1$ is the non-trivial double cover. As the center of $SL(2,\mathbb{R})$ is represented trivially in the complementary representation, the present analysis amounts to taking $HS(\lambda)$ and $HS_E(\lambda)$ to the be the adjoint forms, with trivial center, which is why we focus on Lie algebra elements exponentiating to the identity.}


In many cases, including black holes, these solutions are constructed in the Euclidean version of the theory. This means that the relevant gauge group is $HS_E(\lambda)$ (see Section \ref{SecEuclSymGrp}). The above analysis of the commutative subgroups can be carried out for $HS_E(\lambda)$ as well, and one finds that polynomials in a generator $X$ of $\mathfrak{sl}(2,\C)$ generate a $\R^\N$ or $U(1)^\N$ subgroup depending on whether $X$ is compact or non-compact.

In particular, $2\pi P(X_R)$ exponentiates to $\mathbbm{1}$ in $HS_E(\lambda)$ as well for any polynomial $P$ with integer coefficients. We can find more elements of $\hs_E(\lambda)$ exponentiating to $\mathbbm{1}$ by conjugating $2\pi P(X_R)$ by an element of $SL(2,\mathbbm{C})$. We find that elements of the form $2\pi P(X^g_R)$, where
\be
X^g_R = -(bd+ac)X_0 + (a^2+b^2)X_+ - (d^2+c^2)X_- \;, \quad ad-bc = 1\;, \quad a,b,c,d \in \mathbbm{C} \;, 
\ee
exponentiate to $\mathbbm{1}$. In particular, picking $b = c = 0$, $d^2 = 1/a^2 = t$, we obtain elements
\be
X^t_R = \frac{1}{2}\left( tV^2_1 + \frac{1}{t} V^2_{-1}\right) \;.
\ee
$X^t_R$ for $t = 2$ coincides exactly with the connection of pure $AdS_3$ integrated along the contractible spatial circle. On the other hand, the connection of BTZ black hole solution, integrated along the contractible time-like circle, is recovered after the identification $t = 2\tau$, where $\tau$ is the (complex) modular parameter of the boundary of a solid torus. (Compare with the first equation in (3.1) of \cite{Kraus:2012uf}, bearing in mind that (3.1) picks up a factor of $\tau$ when integrated.)

To obtain even more solutions, one can conjugate a generic element $2\pi P(X_R)$ with any element of $HS(\lambda)$ (or of $HS_E(\lambda)$). In particular, conjugation with elements of the subgroups $N_+$ and $N_-$ is tractable. As explained in Section \ref{SecClassExpElAdjRep}, their adjoint actions yield finite linear combinations of generators of $\hs_E(\lambda)$.

These results should allow for the construction of new spherically symmetric solutions of the higher spin theory, but we leave this task for future work.

\subsection*{Acknowledgments}

It is a pleasure to thank Anton Alekseev, Alberto Cattaneo and Matthias Gaberdiel for useful discussions, and special thanks go to Matthias Gaberdiel for valuable comments on a draft. This research was supported in part by SNF Grant No.200020-149150/1.

\appendix

\section{Review of relevant mathematical concepts and results}

\label{SecRevMathRes}

The aim of this appendix is to review mathematical results relevant to Sections \ref{SecExpCompSeries} and \ref{SecExFinHSTrans}. General references for the material below include Chapter VIII of \cite{reed1980methods}, \cite{weidmann1980linear}, as well as Section 10.1 and 10.2 of \cite{schmudgen1990unbounded}.

\subsection{Unbounded operators in Hilbert spaces}

\label{SecUnOpHilbSp}

Let $\mathcal{H}$ be a separable Hilbert space with scalar product written $\langle \bullet,\bullet \rangle$. We write $|\bullet|$ for the associated norm. A linear operator on $\mathcal{H}$ is a linear map $T: \mathcal{D}(T) \rightarrow \mathcal{H}$, where the linear subspace $\mathcal{D}(T) \subseteq \mathcal{H}$ is the domain of $T$. Unlike in finite dimension, a linear operator is not necessarily continuous in the topology associated with the norm. In fact, a linear operator is continuous if and only if it is bounded, i.e. if the norm of $T$, defined by
\be
\label{EqDefNormOp}
|T| := \sup_{x \in \mathcal{H}} \frac{|Tx|}{|x|} 
\ee
exists in $\mathbbm{R}$. As $\mathcal{H}$ is complete, we also see that the domain of an unbounded operator cannot coincide with $\mathcal{H}$. Many operators of interest, including the differential operators representing the generators of ${\rm sl}(2,\mathbbm{R})$ in representations of the complementary series, are unbounded operators.

Given operators $T_1$, $T_2$, we say that $T_2$ extends $T_1$, written $T_1 \subset T_2$, if $\mathcal{D}(T_1) \subseteq \mathcal{D}(T_2)$ and $T_2|_{\mathcal{D}(T_1)} = T_1$. We will always assume that the domains of the operators of interest are dense in $\mathcal{H}$.

The graph $\mathcal{G}(T)$ of an operator $T$ is the subset of $\mathcal{H} \times \mathcal{H}$ composed of the set of pairs $(f,Tf)$ for $f \in \mathcal{D}(T)$. $\mathcal{H} \times \mathcal{H}$ carries a scalar product, hence a topology, and we call $T$ a \textit{closed} operator if $\mathcal{G}(T)$ is a closed subset of $\mathcal{H} \times \mathcal{H}$. $T$ is \textit{closable} if it has a closed extension. The smallest closed extension is the \textit{closure} $\bar{T}$ of $T$. A \textit{core} $\mathcal{C}$ of a closable operator $T$ is a subspace $\mathcal{C} \subset \mathcal{D}(T)$ such that $T$ and its restriction to $\mathcal{C}$ have the same closure.

\subsection{Skew-symmetric and skew-adjoint operators}

\label{SecSkSymSkAdjOp}

The \textit{adjoint} $T^\ast$ of an operator $T$ can be defined as follows. Let $\mathcal{D}(T^\ast)$ be the space of $f \in \mathcal{H}$ such that there is $h$ satisfying $\langle Tg,f \rangle = \langle g, h \rangle$ for all $g \in \mathcal{H}$. Then $T^\ast f := h$.

We focus here on skew-symmetric and skew-adjoint operators, but all the results below can be translated for symmetric and self-adjoint operators, using the fact that if $T$ is skew-symmetric/skew-adjoint, then $iT$ is symmetric/self-adjoint. 

An operator is called \textit{skew-symmetric} if $T \subset -T^\ast$. Skew-symmetric operators are always closable. $T$ is called \textit{skew-adjoint} if $T = -T^\ast$, where this equality also implies the equality of the corresponding domains. $T$ is called essentially skew-adjoint if $\bar{T}$ is  skew-adjoint.

It is easy to check that if $T$ is skew-symmetric, then all its eigenvalues are purely imaginary. But its adjoint is not necessarily skew-symmetric, and this provides a criterion to test skew-adjointness. Given a skew-symmetric operator $T$, define its \textit{deficiency indices} $d_\pm = {\rm dim}\, {\rm Ker}(T^\ast \pm 1)$. Then $T$ is essentially skew-adjoint if and only if $d_\pm = 0$. Moreover, $T$ admits a skew-adjoint extension if and only if its deficiency indices are equal. For an example of a skew-symmetric operator with non-zero deficiency indices, see Section \ref{SecV31IsNotEssSA}.

There exists a useful criterion ensuring that the deficiency indices are equal, and therefore that $T$ admits skew-adjoint extensions. Let $K$ be a \textit{conjugation} of $\mathcal{H}$, i.e. an antilinear unitary involution on $\mathcal{H}$. Then $T$ is said to be $K$\textit{-real} if it commutes with $K$ and its domain is preserved by $K$. If $T$ is $K$-real for some conjugation $K$, then it admits skew-adjoint extensions.

\subsection{The functional calculus and exponentiation}

\label{SecFuncCalcExp}

Our interest in skew-adjoint operators stems from the fact that the spectral theorem guarantees that they admit a functional calculus. Informally, given a skew-adjoint operator $T$, the functional calculus maps each bounded complex-valued function $f$ on $\mathbbm{R}$ to a bounded operator $f(iT)$ on $\mathcal{H}$ (see Theorem VIII.5 of \cite{reed1980methods} for a precise formulation). 

In particular, the exponential $U(t) := \exp tT$ is well-defined for all $t \in \mathbbm{R}$ and defines a one-parameter group of unitary operators. The family $U(t)$ is \textit{strongly continuous}, i.e. $\lim_{t \rightarrow t_0} U(t)f = U(t_0)f$ for all $f \in \mathcal{H}$. The limit
\be
\lim_{t \rightarrow 0} \frac{U(t)f - f}{t}
\ee
exists if and only if $f \in \mathcal{D}(T)$ and is equal to $Tf$.

\subsection{Representations of enveloping algebras by unbounded operators}

\label{SecApRepEnvAlgUnbOp}

We now turn to the case where $\mathcal{H}$ is a unitary representation of a Lie group. Let $G$ be a real Lie group and $\rho$ be a unitary representation of $G$ on $\mathcal{H}$. $f \in \mathcal{H}$ is called a $C^\infty$\textit{-vector} of $\rho$ if the map $g \rightarrow \rho(g)f$ from $G$ into $\mathcal{H}$ is $C^\infty$. The set of $C^\infty$-vectors forms a linear subspace $\mathcal{D}^\infty(\rho)$ of $\mathcal{H}$.

Define the operator
\be
d\rho(X)f = \left.\frac{d}{dt} \exp (tX)f  \right|_{t = 0}
\ee
with domain $\mathcal{D}^\infty(\rho)$ for all $X \in \mathfrak{g} := {\rm Lie}(G)$. Then $d\rho$ is a Lie algebra representation of $\mathfrak{g}$ on $\mathcal{H}$ by essentially skew-adjoint operators. It extends to an associative algebra representation of the enveloping algebra $U(\mathfrak{g}_\mathbbm{C})$ of the complexification of $\mathfrak{g}$. $\mathcal{D}^\infty(\rho)$ is a core for $d\rho(X)$ for all $X \in U(\mathfrak{g}_\mathbbm{C})$. 

Let $\theta$ be the anti-automorphism of  $U(\mathfrak{g}_\mathbbm{C})$ acting by $-1$ on $\mathfrak{g} \subset U(\mathfrak{g}_\mathbbm{C})$, and $U^\theta(\mathfrak{g})$ the eigenspace of eigenvalue $-1$. We call the elements of $U^\theta(\mathfrak{g})$ \textit{skew-hermitian elements}. They are represented on $\mathcal{H}$ by skew-symmetric operators. Remark that in the case of interest to us, the quotient of $U^\theta({\rm sl}(2,\mathbbm{R}))$ by the ideal generated by $\Omega - \frac{1}{2}(\lambda^2-1) \mathbbm{1}$ coincides with $\hs(\lambda)$, so we recover the fact that $\hs(\lambda)$ is represented by skew-symmetric operators on $\mathcal{H}_\lambda$ in the complementary series.

\subsection{Criterion for skew-adjointness}

\label{SecCritSAEnvAlg}

What remains to be understood is under which conditions a skew-hermitian element of the enveloping algebra is represented by a skew-adjoint operator (and therefore exponentiates). We present below a sufficient criterion due to Nelson and Stinespring \cite{Nelson1959} (see also Section 10.2 of \cite{schmudgen1990unbounded}). 

To this end, we need to introduce the notion of elliptic elements of the enveloping algebra. Recall that we can associate to any generator of $\mathfrak{g}$ a left-invariant vector field on the group manifold $G$. Therefore, we can associate a differential operator on the space of complex-valued smooth functions on $G$ to each element $X$ of $U(\mathfrak{g}_{\mathbbm{C}})$. An element of $U(\mathfrak{g}_{\mathbbm{C}})$ is said to be \textit{elliptic} if the corresponding differential operator is elliptic. Practically, we can pick a basis $\{X_i\}$ of $\mathfrak{g}$ and use the Poincaré-Birkhoff-Witt theorem to write
\be
X = \sum_{[n]} \alpha_{[n]} X^{[n]} \;,
\ee
where $[n] = (n_1,...,n_d)$, $\alpha_{[n]} \in \mathbbm{C}$ and $X^n = X_1^{n_1}...X_d^{n_d}$. We define $|X|$ to be the maximal value of $\sum_i n_i$ such that $\alpha_{[n]} \neq 0$. $|X|$ is the degree of the associated differential operator on $G$. The symbol of $X$ is
\be
\sigma_X: \mathbbm{R}^d \rightarrow \mathbbm{C}\;, \quad \sigma_X(x_1,...,x_d) = \sum_{\stackrel{[n]}{n_1 +...+ n_d = |X|}} \alpha_{[n]} x_1^{n_1} ... x_d^{n_d} \;.
\ee
Now $X$ is elliptic if $X$ is not a multiple of the identity and if $\sigma_X(x_1,...,x_d) = 0$ implies $x_1 = ... = x_d = 0$. Remark that as the ellipticity depends only on the highest degree component of $X$, it is always possible to add an arbitrary small higher degree term to $X$ in order to make it elliptic. In this sense, the set of elliptic elements is dense in $U(\mathfrak{g}_{\mathbbm{C}})$.

Nelson and Stinespring proved \cite{Nelson1959}:
\begin{theorem}
For any skew-hermitian elliptic element $X \in U(\mathfrak{g}_\mathbbm{C})$, $d\rho(X)$ is essentially skew-adjoint.
\end{theorem}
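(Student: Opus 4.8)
\noindent The plan is to establish essential skew-adjointness through the deficiency-index criterion of Appendix \ref{SecSkSymSkAdjOp}: since $X$ is skew-hermitian, $T := d\rho(X)$ is a skew-symmetric operator with core $\mathcal{D}^\infty(\rho)$ (Appendix \ref{SecApRepEnvAlgUnbOp}), so it suffices to show that both deficiency indices $d_\pm = {\rm dim}\,{\rm Ker}(T^\ast \pm \mathbbm{1})$ vanish. Fix $\epsilon = \pm 1$ and suppose $f \in \mathcal{D}(T^\ast)$ satisfies $T^\ast f = \epsilon f$, i.e. $\langle T g, f\rangle = \epsilon \langle g, f\rangle$ for every smooth vector $g$. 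The whole strategy is to prove that such an $f$ is automatically a \emph{smooth} vector, $f \in \mathcal{D}^\infty(\rho)$, for then the skew-symmetry of $T$ forces $f=0$ directly: on $\mathcal{D}^\infty(\rho)$ one has $T^\ast f = -Tf$, so the eigenvalue equation becomes $Tf = -\epsilon f$, whence $\langle Tf, f\rangle = -\epsilon\|f\|^2$ is real, while skew-symmetry makes $\langle Tf, f\rangle$ purely imaginary; hence $\|f\|=0$. Thus $d_+ = d_- = 0$ and $T$ is essentially skew-adjoint.

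\noindent The heart of the argument, and the main obstacle, is the regularity step promoting the deficiency vector $f$ to a smooth vector. I would transfer the abstract eigenvalue equation into an elliptic partial differential equation on the group manifold $G$. Writing $D_X$ for the left-invariant differential operator on $G$ associated to $X$ (this is exactly the operator whose ellipticity defines $X$ to be an elliptic element), one tests $f$ against the mollified vectors $\rho(\chi)g_0 = \int_G \chi(x)\rho(x)g_0\,dx$ for $\chi \in C_c^\infty(G)$ and a fixed smooth vector $g_0$, using that $d\rho(X)\rho(\chi)g_0$ equals $\rho$ applied to a left-invariant derivative of $\chi$. The relation $T^\ast f = \epsilon f$ then says that the $\mathcal{H}$-valued distribution $x \mapsto \rho(x)f$ is a weak solution of $(D_X - \epsilon)u = 0$. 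Since $X$ is elliptic, $D_X$ has an elliptic (indeed real-analytic-coefficient) symbol, so $D_X - \epsilon$ is elliptic, and elliptic hypoellipticity on $G$ upgrades the weak solution to a smooth one. Equivalently, in the cleaner language of the Nelson Laplacian $\Delta = -\sum_i X_i^2$, whose essential self-adjointness and domain calculus characterize $\mathcal{D}^\infty(\rho) = \bigcap_k \mathcal{D}(\bar\Delta^k)$, the elliptic a priori (G{\aa}rding) estimate bounds $D_X$ below by a power of $\Delta$ at top order, which is precisely what lets one conclude $f \in \mathcal{D}^\infty(\rho)$; this quantitative elliptic estimate and the transfer to $G$ are the delicate points, and are the content I would draw from \cite{Nelson1959} and Section 10.2 of \cite{schmudgen1990unbounded}.

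\noindent Combining the two steps yields the theorem. I expect the routine parts to be the mollification identities relating $d\rho(X)$ to left-invariant derivatives and the final real-versus-imaginary contradiction, while the genuinely hard input is the elliptic regularity/domain comparison underlying the smoothness of deficiency vectors; this is also exactly the place where the hypothesis of ellipticity (as opposed to mere skew-hermiticity) is used in an essential way, consistent with the non-exponentiable example $iV^3_1$ of Section \ref{SecV31IsNotEssSA}, whose symbol $\sigma_X(x_\pm,x_0)$ vanishes on a nonzero ray and which indeed has nontrivial deficiency indices. An alternative route would replace the regularity argument by Nelson's analytic-vector theorem, producing a dense set of analytic vectors for $T$ from the heat semigroup $e^{-t\bar\Delta}$; this requires the same elliptic estimate to control the factorial growth of $\|T^n f\|$, so it does not circumvent the main obstacle.
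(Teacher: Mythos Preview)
The paper does not supply its own proof of this theorem; it simply quotes the statement and attributes it to Nelson--Stinespring \cite{Nelson1959} (and Section 10.2 of \cite{schmudgen1990unbounded}). Your sketch is a faithful outline of precisely that argument: reduce essential skew-adjointness to vanishing deficiency indices, transfer the abstract eigenvalue equation $T^\ast f = \epsilon f$ to a weak elliptic PDE on $G$ via the orbit map $x \mapsto \rho(x)f$ and the Garding mollification, invoke elliptic hypoellipticity to promote the deficiency vector to a smooth vector, and finish with the real-versus-imaginary contradiction on $\langle Tf,f\rangle$. Your identification of elliptic regularity as the single nontrivial input, and your remark that the non-elliptic symbol of $iV^3_1$ is exactly what allows the nonzero deficiency indices in Section \ref{SecV31IsNotEssSA}, are both on point.
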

\begin{theorem}
\label{ThComElSAEl}
Let $X$ be a skew-hermitian element of $U(\mathfrak{g}_\mathbbm{C})$ and $Y$ an elliptic element such that $d\rho(X)$ commutes with $d\rho(Y)$. Then $d\rho(X)$ is essentially skew-adjoint.
\end{theorem}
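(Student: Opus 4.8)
\emph{Proof proposal.} The plan is to establish essential skew-adjointness of $T_0 := d\rho(X)$ through the deficiency criterion of Appendix \ref{SecSkSymSkAdjOp}: it suffices to show that any $f \in \mathcal{D}(T_0^\ast)$ with $T_0^\ast f = \pm f$ must vanish, and the elliptic partner $Y$ will be used to regularize such a deficiency vector. First I would replace $Y$ by the hermitian element $E := \theta(Y)\,Y$. Its principal symbol equals $(-1)^{|Y|}|\sigma_Y|^2$, which vanishes only at the origin since $Y$ is elliptic, so $E$ is again elliptic, and $\theta(E)=E$. Because $X$ is skew-hermitian, $\theta(X)=-X$, and the algebraic identity $[X,\theta(Y)]=\theta([X,Y])$ together with $d\rho(\theta(\,\cdot\,))=d\rho(\,\cdot\,)^\ast$ gives $d\rho([X,\theta(Y)])=d\rho([X,Y])^\ast=0$, the last equality coming from the hypothesis $[d\rho(X),d\rho(Y)]=0$. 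Hence $d\rho(X)$ commutes with $A:=d\rho(E)$ on $\mathcal{D}^\infty(\rho)$. Since $iE$ is skew-hermitian and elliptic, the preceding theorem shows $A$ is essentially self-adjoint, and the identity $\langle Af,f\rangle=\|d\rho(Y)f\|^2\geq 0$ on smooth vectors shows $\overline{A}\geq 0$, so the bounded self-adjoint contractions $e^{-t\overline{A}}$, $t>0$, are available.

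Next I would mollify. The key analytic input is the Nelson--Stinespring regularity $\bigcap_{n}\mathcal{D}(\overline{A}^{\,n})=\mathcal{D}^\infty(\rho)$; combined with the spectral bound that $\overline{A}^{\,n}e^{-t\overline{A}}$ is bounded for every $n$, this makes $e^{-t\overline{A}}$ a smoothing operator, $e^{-t\overline{A}}\mathcal{H}\subseteq\mathcal{D}^\infty(\rho)$. Writing $T:=\overline{T_0}$ and $f_t:=e^{-t\overline{A}}f\in\mathcal{D}^\infty(\rho)$, I would show $Tf_t=\mp f_t$. Indeed, $e^{-t\overline{A}}$ is a bounded self-adjoint operator commuting with the closed operator $T$ (see below), hence it commutes with $T^\ast$, so $T^\ast f_t=e^{-t\overline{A}}T^\ast f=\pm f_t$; but $f_t$ is smooth, where $T^\ast=-T$, giving $Tf_t=\mp f_t$. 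Letting $t\to 0^+$, strong continuity of the semigroup gives $f_t\to f$ while $Tf_t=\mp f_t\to\mp f$, so closedness of the graph of $T$ forces $f\in\mathcal{D}(T)$ with $Tf=\mp f$. Finally, skew-symmetry of $T$ makes $\langle Tf,f\rangle$ purely imaginary, whereas $Tf=\mp f$ makes it equal to $\mp\|f\|^2$; therefore $f=0$, proving both deficiency indices vanish.

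The hard part is twofold, and both issues are exactly what the ellipticity of $Y$ buys. The regularity statement $\bigcap_n\mathcal{D}(\overline{A}^{\,n})=\mathcal{D}^\infty(\rho)$ is the substantive theorem (one inclusion is elementary, the reverse is genuine elliptic regularity on the group manifold), and I would quote it from \cite{Nelson1959}. The second delicate point is upgrading the commutation $[d\rho(X),A]=0$ from the common core $\mathcal{D}^\infty(\rho)$ to honest commutation of the bounded $e^{-t\overline{A}}$ with the \emph{closed} operator $T$; a naive common-core argument is not valid for unbounded operators. I would instead argue by uniqueness for the abstract heat equation: for $f_0\in\mathcal{D}^\infty(\rho)$ the curves $t\mapsto d\rho(X)e^{-t\overline{A}}f_0$ and $t\mapsto e^{-t\overline{A}}d\rho(X)f_0$ both solve $\dot u=-\overline{A}u$ with the same smooth initial datum (using that $e^{-t\overline{A}}f_0$ stays smooth and that $d\rho(X)$ commutes with $A$ there), so they coincide; boundedness of $e^{-t\overline{A}}$ and closedness of $T$ then extend the commutation to all of $\mathcal{H}$. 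With these two inputs the argument is complete, and it makes transparent why a \emph{commuting} elliptic partner, rather than a generic supply of analytic vectors, is indispensable: group-analytic vectors do not control $d\rho(X)^n$ when $X$ has degree $\geq 2$, which is precisely the mechanism behind the failure of elements such as $V^3_1$ to exponentiate.
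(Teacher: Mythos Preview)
The paper does not prove this theorem; it is quoted without proof in Appendix~\ref{SecCritSAEnvAlg} and attributed to Nelson and Stinespring \cite{Nelson1959}. So there is no in-paper proof to compare against. What you have written is essentially a reconstruction of the Nelson--Stinespring argument itself: replace $Y$ by the positive hermitian elliptic $E=\theta(Y)Y$, use the elliptic regularity $\bigcap_n\mathcal{D}(\overline{A}^{\,n})=\mathcal{D}^\infty(\rho)$ so that the heat semigroup $e^{-t\overline{A}}$ smooths into $\mathcal{D}^\infty(\rho)$, and mollify a putative deficiency vector back to the core where skew-symmetry forces it to vanish. This is the right strategy and your identification of the two genuinely hard inputs (elliptic regularity, and upgrading the core commutation $[d\rho(X),A]=0$ to commutation of $e^{-t\overline{A}}$ with the closure $T$) is accurate.

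Two minor technical remarks. First, the relation $d\rho(\theta(\,\cdot\,))=d\rho(\,\cdot\,)^\ast$ holds only as an inclusion on $\mathcal{D}^\infty(\rho)$; your deduction that $d\rho([X,\theta(Y)])=0$ still goes through, but via $\langle d\rho(\theta([X,Y]))f,g\rangle=\langle f,d\rho([X,Y])g\rangle=0$ for smooth $f,g$, not by literally taking adjoints. Second, your heat-equation uniqueness step tacitly uses that $t\mapsto e^{-t\overline{A}}f_0$ is $C^1$ into $\mathcal{D}^\infty(\rho)$ with its Fr\'echet topology and that $d\rho(X)$ is continuous for that topology; both facts follow from the same elliptic regularity you already invoke (the graph norms of $\overline{A}^{\,n}$ generate the Fr\'echet topology), so the argument closes, but it is worth saying explicitly.
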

The following result will also be useful to us \cite{schmudgen1990unbounded}.
\begin{theorem}
Let $X \in \mathfrak{g}$ and let $p$ be a polynomial with real coefficients. Then $d\rho(ip(iX))$ is essentially skew-adjoint.
\end{theorem}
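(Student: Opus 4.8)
The plan is to reduce the statement to the essential self-adjointness of a polynomial in a single self-adjoint operator, and then to exploit the invariance of the smooth vectors under the one-parameter unitary group that operator generates. Write $A = d\rho(X)$. Since $X$ is a single element of $\mathfrak g$, $A$ is essentially skew-adjoint (Stone's theorem, as recalled in Appendix \ref{SecApRepEnvAlgUnbOp}), so $\bar A$ is skew-adjoint and $S := \overline{d\rho(iX)} = i\bar A$ is self-adjoint, with $\rho(\exp tX) = e^{t\bar A} = e^{-itS}$. Because $d\rho$ is a $\C$-linear associative-algebra representation of $U(\mathfrak g_\C)$, and using $d\rho(iX)f = Sf$ together with the invariance of $\mathcal D^\infty(\rho)$ under $d\rho(iX)$, one gets $d\rho((iX)^k)f = S^k f$ on $\mathcal D^\infty(\rho)$, hence $d\rho(ip(iX)) = i\,p(S)\upharpoonright \mathcal D^\infty(\rho)$, where $p(S)$ is defined by the functional calculus of $S$. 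As $p$ is real-valued on $\R$, $p(S)$ is self-adjoint, so $i\,p(S)$ is skew-adjoint. The theorem therefore amounts to the claim that $\mathcal D^\infty(\rho)$ is a core for $p(S)$, i.e. that $T_0 := p(S)\upharpoonright\mathcal D^\infty(\rho)$ is essentially self-adjoint. (One first checks, using the conjugate-linear $\theta$-structure of Appendix \ref{SecApRepEnvAlgUnbOp}, that $ip(iX)$ is skew-hermitian, so that $T_0$ is at least symmetric.)

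It is worth emphasizing why the easy arguments fail, as this dictates the strategy. One cannot simply invoke Nelson's analytic vector theorem \cite{Nelson1959}: an analytic vector $f$ for $S$ satisfies $\|S^n f\| \leq C M^n n!$, but then $\|p(S)^n f\|$ grows like $(mn)!$ with $m = \deg p$, so for $m \geq 2$ the vector $f$ is neither analytic nor even quasi-analytic for $p(S)$. Relatedly, being a core for $S$ does not in general imply being a core for $p(S)$. The additional structure I will use is that $\mathcal D^\infty(\rho)$ is $\rho(G)$-invariant, hence invariant under $e^{itS} = \rho(\exp(-tX))$, and that $\mathcal D^\infty(\rho) \subseteq \bigcap_n \mathcal D(S^n) \subseteq \mathcal D(p(S))$.

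I will then compute the deficiency directly. Suppose $g \in \mathcal D(T_0^\ast)$ with $T_0^\ast g = \zeta g$ for some $\zeta \in \{i,-i\}$; equivalently, $\langle p(S)f, g\rangle = \bar\zeta\,\langle f, g\rangle$ for all $f \in \mathcal D^\infty(\rho)$. Replacing $f$ by $e^{isS}f \in \mathcal D^\infty(\rho)$, commuting $e^{isS}$ past $p(S)$, and inserting the spectral measure $d\mu_{f,g}(\mu) := d\langle E(\mu)f, g\rangle$ of $S$, this becomes
\[
\int_\R \big(p(\mu) - \bar\zeta\big)\,e^{is\mu}\,d\mu_{f,g}(\mu) = 0 \qquad \text{for all } s \in \R .
\]
Here $(p(\mu)-\bar\zeta)\,d\mu_{f,g}$ is a finite complex measure (finiteness follows from $f \in \mathcal D(S^{\deg p})$ and Cauchy--Schwarz), and the displayed identity says that its Fourier transform vanishes identically; hence $(p(\mu)-\bar\zeta)\,d\mu_{f,g} \equiv 0$. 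Since $p$ is real, $p(\mu) - \bar\zeta$ has nonzero imaginary part for every $\mu \in \R$ and never vanishes, so $d\mu_{f,g} \equiv 0$. Taking the total mass gives $\langle f, g\rangle = 0$ for all $f \in \mathcal D^\infty(\rho)$, and density of $\mathcal D^\infty(\rho)$ forces $g = 0$. Thus both deficiency indices of $T_0$ vanish, $T_0$ is essentially self-adjoint, and $d\rho(ip(iX)) = i\,T_0$ is essentially skew-adjoint.

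The main obstacle is exactly the gap flagged in the second paragraph: for $\deg p \geq 2$ the vector-by-vector (analytic or quasi-analytic) criteria are unavailable, and the proof must instead be organized around the spectral theorem for the single operator $S$ together with the invariance of $\mathcal D^\infty(\rho)$ under $e^{itS}$. The two facts that make this work, and should be verified carefully, are that the smooth vectors $\mathcal D^\infty(\rho)$ are genuinely invariant under $\rho(G)$ (hence under $e^{itS}$), and that $p(S)$, as a function of $S$, commutes with $e^{itS}$ on this domain; once these are in place the Fourier-uniqueness step is essentially automatic. Note that nothing in the argument uses $\dim \mathfrak g = 3$ or any special feature of $\mathfrak{sl}(2,\R)$, so the statement holds for an arbitrary real Lie group.
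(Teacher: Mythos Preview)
The paper does not actually prove this theorem; it merely quotes it from Schm\"udgen's book \cite{schmudgen1990unbounded}. Your argument is correct and self-contained. The key ingredients --- that $\mathcal{D}^\infty(\rho)$ is invariant under $\rho(\exp tX)=e^{-itS}$, that $\mathcal{D}^\infty(\rho)\subseteq\bigcap_n\mathcal{D}(S^n)$ so $p(S)$ is well defined there, and that a finite complex measure with identically vanishing Fourier transform is zero --- are all standard and correctly invoked. The observation that analytic-vector methods fail for $\deg p\geq 2$ and must be replaced by the spectral/Fourier argument is exactly the point, and your proof is essentially the one found in Schm\"udgen (Section~10.2), so there is no substantive difference in approach to report.

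One cosmetic remark: you do not really need to check separately that $ip(iX)$ is skew-hermitian via the $\theta$-structure. Since $p$ has real coefficients and $S$ is self-adjoint, $p(S)\!\upharpoonright\!\mathcal{D}^\infty(\rho)$ is manifestly symmetric, which is all that the deficiency-index argument requires.
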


Remark that for a compact group, the quadratic Casimir is an elliptic element. Theorem \ref{ThComElSAEl} then implies that all the skew-hermitian elements of the enveloping algebra are represented by essentially skew-adjoint operators. For $SL(2,\mathbbm{R})$, the case of interest to us, the quadratic Casimir is not elliptic and we exhibit in Section \ref{SecV31IsNotEssSA} a skew-hermitian element represented by a non-essentially skew-adjoint operator.

\section{Elements of $\hs(\lambda)$ represented by operators admitting skew-adjoint extensions}

\label{SecExtNonEssSAOp}

In the main body of the paper, we considered an element of $\hs(\lambda)$ to be exponentiable if and only if it is represented in the complementary series of representations by an essentially skew-adjoint operator. This definition might seem too strong, as there are elements of $\hs(\lambda)$ represented by operators admitting non-unique skew-adjoint extensions. We mention here a method to characterize such operators, and remark that all the elements of the orthogonal subalgebra of $\hs(\lambda)$ admit skew-adjoint extensions.

In order to make sense of their exponentials, non essentially skew-adjoint operators need to be extended to skew-adjoint operators. Skew-adjoint extensions do not always exist, and when they do, they are not unique. As we mentioned in Appendix  \ref{SecSkSymSkAdjOp}, a skew-symmetric operator is essentially skew-adjoint if and only if its deficiency indices are both equal to zero, and admits skew-adjoint completions if and only if its deficiency indices are equal. In the case of ordinary differential operators, there is no efficient way of computing the deficiency indices (see for instance \cite{everitt1999boundary} on p.86). However there is a nice criterion ensuring that an operator admits skew-adjoint extension. An operator which is real with respect to a conjugation of $\mathcal{H}$ necessarily admits skew-adjoint extensions. The existence of the complex conjugation on $\mathcal{H}$ then ensures that the operators $V^s_n$ for $s$ even, as well as their linear combinations in $\hs(\lambda)$, admit skew-adjoint extensions.

Interestingly, the even spin generators span a subalgebra, the orthogonal higher spin Lie algebra $\hs_O(\lambda)$. The latter can be used to build a higher spin theory containing one even spin field for each spin (see for instance \cite{Ahn:2011pv, Gaberdiel:2011nt}). The argument above shows that all the elements of $\hs_O(\lambda)$ are represented by operators admitting skew-adjoint extensions. If there was a way of singling out a particular completion for each element of $\hs_O(\lambda)$, they would all exponentiate.

{
\small

\providecommand{\href}[2]{#2}\begingroup\raggedright\endgroup

}

\end{document}